\def\pref{{\rm Pref}}
\def\suff{{\rm Suff}}
\def\fact{{\rm Fact}}
\def\subw{{\rm Subw}}
\newtheorem{theorem}{Theorem}
\newtheorem{corollary}[theorem]{Corollary}
\newtheorem{lemma}[theorem]{Lemma}
\newtheorem{defin}[theorem]{Definition}
\newtheorem{examp}[theorem]{Example}
\newtheorem{rema}[theorem]{Remark}
\title{The computational complexity of universality problems for prefixes, suffixes, factors, and subwords of regular languages}
\author{Narad Rampersad\\
Department of Mathematics and Statistics\\
University of Winnipeg\\
515 Portage Avenue\\
Winnipeg, Manitoba R3B 2E9\\
Canada\\
{\tt narad.rampersad@gmail.com}\\
and\\ Jeffrey Shallit and Zhi Xu\\
School of Computer Science \\
University of Waterloo \\
Waterloo, Ontario N2L 3G1\\
Canada \\
{\tt shallit@cs.uwaterloo.ca}
}
\begin{document}

\maketitle

\begin{abstract}
In this paper we consider the computational complexity of the
following problems:  given a DFA or NFA representing a regular language
$L$ over a finite alphabet $\Sigma$,
is the set of all prefixes (resp., suffixes, factors, subwords) of
all words of $L$ equal to $\Sigma^*$?  In the case of testing universality
for factors of languages represented by DFA's, we find an interesting
connection to \v{C}ern\'y's conjecture on synchronizing words.
\end{abstract}

\section{Introduction}
The complexity of deciding universality --- i.e., whether a particular
formal language over a finite alphabet $\Sigma$ contains all of
$\Sigma^*$ --- is a recurring theme in formal language theory
\cite{Holzer}.  Frequently
it is the case that testing membership for a single word is easy, while
testing membership for all words simultaneously is hard.
For example, in two
classic papers,  Bar-Hillel, Perles, and Shamir proved
that testing universality for context-free languages represented by
grammars is recursively unsolvable \cite[Thm.\ 6.2 (a), p.\ 160]{Bar-Hillel},
and Meyer and Stockmeyer \cite[Lemma 2.3, p.\ 127]{Meyer} proved that
testing universality for regular languages represented by nondeterministic
finite automata is PSPACE-complete.   (Also see \cite{Hartmanis,Hunt}.)

Kozen \cite[Lemma 3.2.3, p.\ 261]{Kozen} proved that
determining whether the intersection of the languages accepted by $n$ DFA's 
is empty is PSPACE-complete. By complementing each DFA, we get

\begin{lemma}
      The following decision problem is PSPACE-complete:   \\
\centerline{\rm Given $n$ DFA's $M_1, M_2, \ldots, M_n$, each with input alphabet $\Sigma$,
	 is $\bigcup_{1 \leq i \leq n} L(M_i) = \Sigma^*$?}
\end{lemma}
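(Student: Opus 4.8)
The plan is to establish the two halves of PSPACE-completeness separately, leaning on Kozen's result as the excerpt suggests.

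For PSPACE-hardness, I would reduce directly from the DFA intersection-emptiness problem that Kozen showed PSPACE-complete: given $n$ DFA's $M_1, \ldots, M_n$ over a common alphabet $\Sigma$, decide whether $\bigcap_{1 \le i \le n} L(M_i) = \emptyset$. Assuming without loss of generality that each $M_i$ is a complete DFA, let $M_i'$ be the DFA obtained from $M_i$ by interchanging final and nonfinal states, so that $L(M_i') = \Sigma^* \setminus L(M_i)$. By De Morgan's law,
\[
\bigcup_{1 \le i \le n} L(M_i') = \bigcup_{1 \le i \le n} (\Sigma^* \setminus L(M_i)) = \Sigma^* \setminus \bigcap_{1 \le i \le n} L(M_i),
\]
so $\bigcup_i L(M_i') = \Sigma^*$ if and only if $\bigcap_i L(M_i) = \emptyset$. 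Since the map $(M_1, \ldots, M_n) \mapsto (M_1', \ldots, M_n')$ is clearly computable in logarithmic space (it only rewrites the set of final states), this is a valid reduction and PSPACE-hardness follows. The same reduction run in reverse (complement each $M_i'$) reduces our problem back to Kozen's, which immediately gives membership in PSPACE.

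If one prefers a self-contained upper bound, I would instead show the complement problem is in NPSPACE and invoke Savitch's theorem: a nondeterministic machine guesses a word $w \in \Sigma^*$ one symbol at a time, maintaining only the current state of each $M_i$ — a total of $O(\sum_i \log |M_i|)$ bits — and updating all $n$ states on each guessed symbol; it accepts if at some point every $M_i$ is simultaneously in a nonfinal state, which certifies $w \notin \bigcup_i L(M_i)$ and hence $\bigcup_i L(M_i) \ne \Sigma^*$. This runs in polynomial space, so the complement lies in NPSPACE $=$ PSPACE, hence so does the original problem.

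I do not expect any real obstacle here; the argument is short and the reduction is trivial. The single point requiring a word of care is the completeness assumption on the input DFA's, which is needed so that state-complementation stays within the class of DFA's — but this is harmless, since any DFA can be made complete in logarithmic space by adjoining a single dead state without altering its language.
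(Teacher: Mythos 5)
Your proposal is correct and matches the paper's approach exactly: the paper derives the lemma from Kozen's PSPACE-completeness of DFA intersection emptiness simply ``by complementing each DFA,'' which is precisely your De Morgan reduction (your extra NPSPACE/Savitch upper bound and the completeness caveat are fine but not needed beyond that one-line argument).
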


Another frequently occurring theme is looking at prefixes, suffixes,
factors, and subwords of languages.  We say a word $y$ is a {\it factor} of
a word $w$ if there exists words $x,z$ such that $w = xyz$.   If in
addition $x = \epsilon$, the empty word, then we say $y$ is a {\it prefix}
of $w$; if $z = \epsilon$, we say $y$ is a {\it suffix}.  Finally, we
say $y$ is a {\it subword} of $w$ if we can write $y = a_1 a_2 \ldots a_n$
and $w = w_1 a_1 w_2 a_2 \cdots w_n a_n w_{n+1}$ for some letters
$a_i \in \Sigma$ and words $w_i \in \Sigma^*$.  (In the literature,
what we call factors are sometimes called ``subwords'' and what we
call subwords are sometimes called ``subsequences''.)

Let $L \subseteq \Sigma^*$ be a language.  We define 
$$\pref(L) = \lbrace x \in \Sigma^* \ : \ 
\text{there exists $y \in L$ such that $x$
is a prefix of $y$} \rbrace ,$$ 
and in a similar manner we define
$\suff(L)$, $\fact(L)$, and $\subw(L)$ for suffixes, factors, and
subwords.

In this paper we combine these two themes, and examine the computational
complexity of testing universality for the prefixes, suffixes, factors,
and subwords of a regular language.  As we will see, the complexity 
depends both on how the language is represented (say, by a DFA or NFA),
and on the particular type of factor or subword
involved.  In the case where we are testing universality for suffixes of
a language represented by a DFA, we find an interesting connection with
\v{C}ern\'y's celebrated conjecture on synchronizing words.

     Let us briefly mention some motivation for examining these
questions.  First, they are related to natural questions involving infinite
words.  By $\Sigma^{\omega}$ we mean the set of all right-infinite
words over $\Sigma$, that is, infinite words of the form
$a_0 a_1 a_2 \cdots$, where $a_i \in \Sigma$ for all integers $i \geq 0$.
Similarly, by ${}^\omega \Sigma$ we mean the set of all left-infinite
words over $\Sigma$, that is, infinite words of the form
$\cdots a_2 a_1 a_0$.  Finally, by ${}^\omega\Sigma^\omega$ we mean the
set of all (unpointed) bi-infinite words of the form
$\cdots a_{-2} a_{-1} a_0 a_1 a_2 \cdots$, where two words are
considered the same if one is a finite shift of the other.  

     Given a language of finite words $L \subseteq \Sigma^*$, we define
$L^\omega = \lbrace x_1 x_2 x_3 \cdots \ : \ x_i \in L-\lbrace \epsilon
\rbrace \rbrace$, the language of right-infinite words generated by $L$.
In a similar way we can define ${}^\omega L$ and ${}^\omega L^\omega$.

Given a finite set of finite words $S$, it is a natural question whether
all right-infinite words (resp., left-infinite words, bi-infinite
words) can be generated using only words of $S$.  The following
results are not difficult to prove using the usual argument from
K\"onig's infinity lemma or a compactness argument:

\begin{theorem}
Let $S \subseteq \Sigma^*$ be a finite set of finite words over the
finite alphabet $\Sigma$.  Then
\begin{itemize}
\item[(a)] $S^\omega = \Sigma^\omega$ iff $\pref(S^*) = \Sigma^*$.

\item[(b)] ${}^\omega S = {}^\omega \Sigma$ iff $\suff(S^*) = \Sigma^*$.

\item[(c)] ${}^\omega S^\omega = {}^\omega\Sigma^\omega$ iff
$\fact(S^*) = \Sigma^*$.
\end{itemize}
\end{theorem}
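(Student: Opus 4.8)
The plan is to prove the three equivalences separately; in each, the implication from the bi-/right-/left-infinite statement to the statement about finite words is a short truncation argument, while the converse is a compactness argument via K\"onig's lemma, and part (b) will simply be part (a) read through the word-reversal map. First I would dispose of the degenerate cases: if $\Sigma = \emptyset$ both sides of each equivalence hold, and if $S$ contains no nonempty word then $S^* = \{\epsilon\}$ and both sides of each equivalence fail; so assume $\Sigma \neq \emptyset$ and set $N = \max\{\,|x| : x \in S,\ x \neq \epsilon\,\} < \infty$.

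For (a), direction $(\Rightarrow)$: take $w \in \Sigma^*$, pick a letter $a$, factor $wa^\omega \in \Sigma^\omega = S^\omega$ as $x_1 x_2 x_3 \cdots$ with $x_i \in S \setminus\{\epsilon\}$, and observe that $x_1 \cdots x_{|w|}$ lies in $S^*$ and has length $\geq |w|$, hence has $w$ as a prefix. For $(\Leftarrow)$: given $\alpha \in \Sigma^\omega$, form the tree $T_\alpha$ whose nodes are the finite sequences $(x_1, \dots, x_k)$ over $S \setminus\{\epsilon\}$ with $x_1 \cdots x_k$ a prefix of $\alpha$, ordered by extension. It is finitely branching (at most $|S|$ children per node), and it is infinite because for each $n$ the length-$n$ prefix of $\alpha$, being in $\pref(S^*)$, is a prefix of some $x_1 \cdots x_k \in S^*$, and discarding the last few $x_i$ so as to stay within the first $n$ letters leaves a node of depth $\geq (n-N)/N$. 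K\"onig's lemma then supplies an infinite branch, which is a factorization of $\alpha$ witnessing $\alpha \in S^\omega$. Part (b) follows by applying (a) to $S^R = \{\,x^R : x \in S\,\}$, using $(S^R)^* = (S^*)^R$, $\pref(L^R) = (\suff(L))^R$, and the fact that reversal carries ${}^\omega S$ bijectively onto $(S^R)^\omega$ and ${}^\omega\Sigma$ onto $\Sigma^\omega$.

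For (c), direction $(\Rightarrow)$ is again easy: given $w$ and a letter $a$, factor the bi-infinite word $\cdots aaa\,w\,aaa\cdots \in {}^\omega\Sigma^\omega = {}^\omega S^\omega$ as $\cdots x_{-1}x_0x_1\cdots$, and note that the finitely many consecutive blocks meeting the marked copy of $w$ concatenate to an element of $S^*$ having $w$ as a factor. The real content is $(\Leftarrow)$, and this is where I expect the main difficulty: unlike the one-sided cases, the finite pieces of a bi-infinite factorization can be extended at either end, so they are not linearly nested, and the tree has to be set up carefully. Fix a representative $\cdots a_{-1}a_0a_1\cdots$ of a bi-infinite word $\beta$ and write $\beta[i,j) = a_i\cdots a_{j-1}$. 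Call a finite increasing integer sequence $c_0 < \cdots < c_k$ a \emph{block sequence} if every $\beta[c_i,c_{i+1}) \in S\setminus\{\epsilon\}$, and say it \emph{covers} $[-M,M]$ if $c_0 \leq -M$ and $c_k \geq M$. The key point is that $\fact(S^*) = \Sigma^*$ yields, for every $M$, a block sequence covering $[-M,M]$: the word $a_{-(M+N)}\cdots a_{M+N}$ occurs inside some $x_1\cdots x_k \in S^*$, and those factorization boundaries of $x_1\cdots x_k$ that lie strictly inside this occurrence, read as positions of $\beta$, do the job — only boundaries inside the occurrence give blocks that are genuine $x_i$'s, hence both lie in $S$ and agree with $\beta$, and the margin $N$ is exactly what absorbs the blocks straddling $\pm M$. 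The minimal block sequences covering $[-M,M]$, for $M = 0,1,2,\dots$, organize into a tree under the operation of restricting a cover of $[-M,M]$ to the shortest sub-sequence that still covers $[-(M-1),M-1]$; this tree is finitely branching (consecutive $c_i$ differ by at most $N$, so each such sequence is a bounded-length walk in a finite graph with boundedly many endpoints) and has nodes of every depth, so K\"onig's lemma gives an infinite branch. The breakpoints along that branch form a doubly infinite set $\cdots < p_{-1} < p_0 < p_1 < \cdots$, cofinal in both directions, with each $\beta[p_i,p_{i+1}) \in S\setminus\{\epsilon\}$, and reading $\beta$ off block by block gives $\beta \in {}^\omega S^\omega$.

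In short, the mechanism is uniform — build a finitely branching tree of finite approximate factorizations, show it is infinite, and apply K\"onig's lemma — and the obstacle specific to (c) is that for bi-infinite words the approximations are not nested, forcing the two fixes above: the $N$-margin when extracting a finite factorization from $\fact(S^*) = \Sigma^*$, and the canonical minimal restriction when defining the parent of a node. I would also remark that finiteness of $S$ is essential: for a general regular language $L$ one can have $\pref(L) = \Sigma^*$ while $L^\omega \neq \Sigma^\omega$ (for instance $L = \Sigma^* a \Sigma^*$, whose $\omega$-power omits $b^\omega$), so the argument genuinely relies on the bound $N$.
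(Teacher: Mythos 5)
Your proof is correct, and it is exactly the route the paper intends: the paper gives no details, merely asserting that the theorem follows "using the usual argument from K\"onig's infinity lemma or a compactness argument," and your proposal carries out precisely that argument (with the bi-infinite case (c) — the $N$-margin extraction of block sequences and the minimal-restriction tree — handled correctly, which is the only part requiring real care). No gaps to report.
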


This theorem, then, leads naturally to the questions on prefixes,
suffixes, and factors considered in this paper.

Another motivation is the following:  as is well-known, the following
decision problem is recursively unsolvable \cite{Paterson}:

{\it Given a finite set of square matrices of the same dimension,
with integer entries, decide if some product of them is the all-zeros matrix.}

On the other hand, if the integer matrices are replaced by Boolean
matrices, and the multiplication is Boolean matrix multiplication,
the problem is evidently solvable, as there are only finitely many
different possibilities to consider.  We will show in
Corollary~\ref{boolean} below that the decision problem for Boolean matrices is 
PSPACE-complete.

\section{Basic observations}
\label{basic-sec}

     We recall some observations from
\cite{Kao1}.

     Given a DFA or NFA $M = (Q, \Sigma,\delta,q_0,F)$, we can easily construct
NFA's accepting $\pref(L(M))$, $\suff(L(M))$, $\fact(L(M))$, and
$\subw(L(M))$, as follows:

    To accept $\pref(L(M))$ with $M' = (Q, \Sigma, \delta, q_0, F')$,
 we simply change the set of final states as follows:  a state $q$ is in
$F'$ if and only if there is a path from $q$ to a state of $F$.  Note
that in this case, if $M$ is a DFA, then so is $M'$.

     To accept $\suff(L(M))$, we simply change the set of initial
states as follows:  a state $q$ is initial if and only if there is a
path from $q_0$ to $q$.  This construction creates a ``generalized''
NFA which differs from the standard definition of NFA in that it allows
an arbitrary set of initial states $I$, instead of just a single
initial state.  To get around this problem, we can simply create a new
initial state $q'_0$ and $\epsilon$-transitions to all the states of
$I$, and use the standard algorithm to get rid of the
$\epsilon$-transitions  without increasing the number of states
\cite{Hopcroft}.

     To accept $\fact(L(M))$, we do both of the transformations given
above.  In fact, there is an even simpler way to create a ``generalized
NFA'' accepting $\fact(L(M))$:  starting with $M$, remove all states
not reachable from the initial state, and remove all states from which
one cannot reach a final state.  Then make all the remaining states
both initial and final.

     To accept $\subw(L(M))$, we add $\epsilon$-transitions linking
every pair of states for which there is an ordinary transition.  
This produces an NFA-$\epsilon$, and again the $\epsilon$-transitions can
easily be removed without increasing the number of states.

\section{Universality for DFA's}

    In this section we assume that our regular language is represented
by a DFA $M = (Q, \Sigma, \delta, q_0, F)$.  We assume our DFA is
{\it complete}, that is, that $\delta:Q \times \Sigma \rightarrow \Sigma^*$
is well-defined for all elements of its domain.  

\subsection{Universality for prefixes}

     Of all our results, universality for $\pref(L)$ when $L$ is a DFA
is the easiest to decide.  By a well-known construction, given a
DFA $M$ accepting $L$, we can create a new DFA $M'$ as follows:  
$M' = (Q, \Sigma, \delta, q_0, Q-F')$, where $F' = 
\lbrace q \in Q \ : \ \text{there exists a path from $q$ to an element
of $F$} \rbrace$.
Then $L(M') = \overline{\pref(L(M))}$.
Furthermore, we can determine $F'$ in linear time
as follows: we reverse all arrows in $M$, add a new state $q'$
with an arrow to each final state in $F$, and determine which states
are reachable from $q'$.  The resulting set of states equals $F'$.
Now $\pref(L) = \Sigma^*$ if and only if $L(M') = \emptyset$, and this latter
condition can easily be tested by using depth-first search in $M'$,
starting from its initial state.  We have proved:

\begin{theorem}
    Given a DFA $M$ with input alphabet $\Sigma$,
we can test if $\pref(L(M)) = \Sigma^*$ in linear time.
\end{theorem}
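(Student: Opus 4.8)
The plan is to reduce the question to two linear-time graph searches in the transition diagram of $M$, via the standard ``prefix automaton'' construction sketched above. First I would set $F' = \{q \in Q \ : \ \text{some state of } F \text{ is reachable from } q\}$ and verify that the DFA $M_p = (Q, \Sigma, \delta, q_0, F')$ accepts exactly $\pref(L(M))$. One direction is immediate: if $w \in \pref(L(M))$, then $wz \in L(M)$ for some $z$, so from $\delta(q_0,w)$ one reaches a final state by reading $z$, hence $\delta(q_0,w) \in F'$. For the converse I would invoke completeness of $M$: since $\delta$ is total, $\delta(q_0,w)$ is defined for every $w$, and if it lies in $F'$ then there is some $z$ with $\delta(\delta(q_0,w),z) \in F$, so $wz \in L(M)$ and $w \in \pref(L(M))$.

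Next I would note that $F'$ is computable in time linear in the size of $M$ by a single search: build the reverse transition graph, add a source vertex with an edge into every state of $F$, and collect all vertices reachable from the source — these are precisely the states of $F'$. Then $\pref(L(M)) = \Sigma^*$ holds iff $L(M_p) = \Sigma^*$, which by completeness of $M$ is equivalent to saying that no state of $Q - F'$ is reachable from $q_0$ in $M$; equivalently, the complement DFA $M' = (Q, \Sigma, \delta, q_0, Q-F')$ accepts $\emptyset$. This I would test by a depth-first search from $q_0$, reporting failure as soon as a visited state falls outside $F'$. Adding up: one linear search to find $F'$, one linear search from $q_0$, plus linear-time construction of the reverse graph and of $M'$, for a total of $O(|Q|\cdot|\Sigma|)$, linear in the input.

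The step I would be most careful about — and the only genuine subtlety — is the appeal to completeness in establishing that $M_p$ recognizes $\pref(L(M))$ and, equally, in the characterization ``$\pref(L(M)) = \Sigma^*$ iff every state reachable from $q_0$ lies in $F'$.'' Without the standing assumption that $\delta$ is total, a word could have no run at all in $M$, hence fail to be a prefix of any accepted word, even though the clean reachability test would see nothing wrong; so the completeness hypothesis of this section is exactly what makes the argument go through. Everything else is routine bookkeeping about linear-time graph traversal.
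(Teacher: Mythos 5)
Your proposal is correct and follows essentially the same route as the paper: form $F'$ as the set of states from which a final state is reachable (computed by one reverse-graph search), observe that $\pref(L(M))$ is accepted by the DFA with accepting set $F'$, and reduce universality to emptiness of the complement DFA $(Q,\Sigma,\delta,q_0,Q-F')$, tested by a depth-first search from $q_0$. Your extra remark about the role of completeness is a sound observation that matches the paper's standing assumption on DFA's.
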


\subsection{Universality for suffixes}
\label{suffix-dfa}

     Universality for suffixes is, perhaps surprisingly, much more
difficult.  

\begin{theorem}
     The decision problem\\
\centerline{\rm Given a DFA $M$ with input alphabet $\Sigma$, is $\suff(L(M)) = \Sigma^*$?} \\
is PSPACE-complete.
\end{theorem}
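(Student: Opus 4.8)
The plan is to prove membership in PSPACE and PSPACE-hardness separately. Membership is the easy half: by the construction recalled in Section~\ref{basic-sec}, from the DFA $M$ one obtains in polynomial time an NFA $M'$ with no more states that accepts $\suff(L(M))$ (declare every state reachable from $q_0$ to be initial, then remove the multiple initial states in the usual way). Hence $\suff(L(M)) = \Sigma^*$ iff $L(M') = \Sigma^*$, and universality for NFA's is in PSPACE by the theorem of Meyer and Stockmeyer. So the problem lies in PSPACE, and it remains to prove hardness.

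For hardness I would reduce from the PSPACE-complete problem of Lemma~1: given complete DFA's $A_1,\dots,A_n$ over a common alphabet $\Sigma$ (with disjoint, accessible state sets $Q_i$, start states $q_{0,i}$, final sets $F_i$, transition functions $\delta_i$), decide whether $\bigcup_i L(A_i) = \Sigma^*$. Set $\Gamma = \Sigma \cup \{\#\}$ for a new symbol $\#$ (the alphabet is part of the input, so this is allowed), and let $M$ be the DFA over $\Gamma$ obtained by taking the disjoint union of the $A_i$, keeping all of their $\Sigma$-transitions and all of their final states, adding for each $q \in Q_i$ the transition $\delta(q,\#) = q_{0,\,(i \bmod n)+1}$, so that $\#$ cyclically ``rotates'' from copy $i$ into the start state of copy $i+1$, and taking $q_{0,1}$ as start state. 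Since each $A_i$ is accessible, $M$ is accessible, so by the Section~\ref{basic-sec} construction $w \in \suff(L(M))$ iff there is some state $q$ of $M$ with $\delta(q,w) \in \bigcup_i F_i$.

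The crux is that $\#^n$ acts as a \emph{partial reset}: for every $q \in Q_i$ one has $\delta(q,\#^n) = q_{0,i}$. Consequently, for $x \in \Sigma^*$ the word $\#^n x$ lies in $\suff(L(M))$ iff $\delta_i(q_{0,i},x) \in F_i$ for some $i$, i.e. iff $x \in \bigcup_i L(A_i)$. This gives one direction immediately: if $\bigcup_i L(A_i) \neq \Sigma^*$, pick $x \notin \bigcup_i L(A_i)$, and then $\#^n x \notin \suff(L(M))$. For the converse, assume $\bigcup_i L(A_i) = \Sigma^*$ and let $w \in \Gamma^*$ be arbitrary; write $w = u_0 \# u_1 \# \cdots \# u_m$ with $u_0,\dots,u_m \in \Sigma^*$. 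If $m=0$ then $w = u_0 \in L(A_i)$ for some $i$, so $\delta(q_{0,i},w) \in F_i$. If $m \geq 1$, choose $i^\ast$ with $u_m \in L(A_{i^\ast})$ and start the computation in copy $Q_{i_0}$ with $i_0 \equiv i^\ast - m \pmod n$; the $m$ occurrences of $\#$ then carry the computation exactly to $q_{0,i^\ast}$ just before $u_m$ is read, and $\delta(q_{0,i^\ast},u_m) \in F_{i^\ast}$. Either way $w \in \suff(L(M))$, so $\suff(L(M)) = \Gamma^*$. Hence $\suff(L(M)) = \Gamma^*$ iff $\bigcup_i L(A_i) = \Sigma^*$, and the reduction is plainly polynomial-time.

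The step I expect to be the real obstacle is designing the $\#$-gadget correctly. Because the copies of the $A_i$ sit inside $M$ as fully reachable subautomata, their internal states contribute to $\suff(L(M))$ all the left quotients $u^{-1}L(A_i)$, so naively $\suff(L(M)) \cap \Sigma^*$ may be strictly larger than $\bigcup_i L(A_i)$ --- possibly all of $\Sigma^*$ even when $\bigcup_i L(A_i)$ is not, which would wreck the reduction. The cyclic wiring of $\#$ is exactly what defeats this: $\#^n$ synchronizes each copy to its own start state, so the witness $\#^n x$ probes precisely the start-state behaviour, i.e. membership in the union; a non-cyclic wiring (say, sending copy $n$ to a dead state) would instead collapse all copies into one state and make $M$ non-universal independently of the $A_i$. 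This is also where the synchronization-theoretic flavour --- and the connection to \v{C}ern\'y's conjecture mentioned above --- enters, since testing $\suff(L(M)) = \Sigma^*$ amounts to asking whether no word drives the entire reachable state set into the complement of the final set, a constrained synchronization question that specializes to ordinary synchronization when that complement is a single state.
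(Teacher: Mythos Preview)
Your proof is correct. The membership argument is identical to the paper's. For hardness you reduce from the same PSPACE-complete source problem (universality of a union of $n$ DFA's), but your gadget is different and in fact simpler: the paper introduces \emph{two} fresh letters $a,c$, makes every state of every $M_i$ final, adds auxiliary accepting/rejecting states $r_i,s_i$ that record the original accept status via the $c$-transition, and uses $a$ both to cycle through the initial states and to return to them from inside each copy. You instead keep the original final sets, add a single fresh letter $\#$ whose sole role is to cycle from copy $i$ to the start state of copy $i+1$, and exploit that $\#^n$ is a copy-preserving reset, so the witness $\#^n x$ tests exactly membership in $\bigcup_i L(A_i)$. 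Your construction is leaner (no new states, one new letter, final sets unchanged) and the correctness argument is correspondingly shorter; the paper's heavier encoding does not buy anything for the PSPACE-completeness result itself, though its end-marker $c$ makes the ``bad'' words have the rigid shape $a\Sigma^* c$, which is convenient when the construction is reused later (e.g.\ in the shortest-counterexample bounds and in the NFA prefix reduction). Your closing remark that the suffix-universality question is a constrained synchronization question --- no word sends the whole state set into $Q\setminus F$ --- is exactly right and is the same intuition behind Lemma~\ref{second}.
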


\begin{proof}
Suppose $M$ has $n$ states.  To see that the decision problem is in
PSPACE, note that by the results in section~\ref{basic-sec}, we can
convert $M$ to an NFA $M'$ accepting $\suff(L(M))$, having only one more
state than $M$.  As we noted above, the universality decision
problem for NFA's is in PSPACE.


Now we show that the decision problem is PSPACE-hard.  To do so, we
reduce from the following well-known PSPACE-complete problem:
Given $n$ DFA's $M_0, M_1, \ldots, M_{n-1}$,
is there a word accepted by all of them?
More precisely, we reduce from the following problem:
given $n$ DFA's $M_0, M_1, \ldots, M_{n-1}$, is the union of all
their languages equal to $\Sigma^*$?  

Suppose $M_i = (Q_i, \Sigma, \delta_i, q_0^i, F_i)$ for $0 \leq i < n$.
Without loss of generality, we assume no $M_i$ has transitions into the
initial state; if this condition does not hold, we alter $M_i$ to add a new
initial state and transitions out of this initial state that coincide with
the original initial state.   Let $a, c$ be letters not in $\Sigma$, and
let $\Delta = \Sigma \ \cup \ \lbrace a,c \rbrace$.
We create a new DFA $M = (Q, \Delta, \delta, q, F)$ which is illustrated
in Figure~\ref{suff1} below. 

\begin{figure}[H]
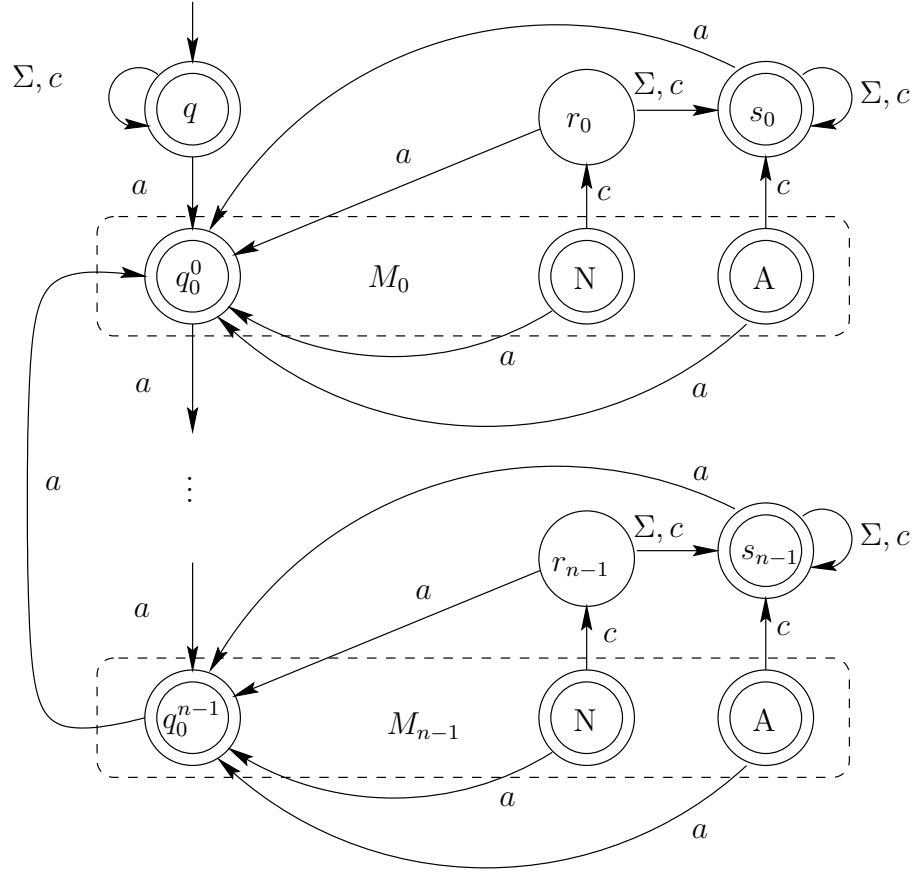

\begin{center}
\input ufig1.pstex_t
\end{center}
\caption{The reduction for suffixes}
\label{suff1}
\end{figure}

The idea of the construction is as follows:  our new machine $M$ incorporates
all the automata $M_0, M_1, \ldots, M_{n-1}$, but
we change all states of each
$M_i$ to final. For each $M_i$, we add two new states, $r_i$ (nonaccepting)
and $s_i$ (accepting).
Each formerly
nonaccepting state of $M_i$ has a transition on $c$ to $r_i$, and
each accepting state has a transition on $c$ to $s_i$; this is illustrated
in Figure~\ref{suff1} with the states labeled ``N'' (for nonaccepting) and
``A'' (for accepting).  Each state of $M_i$, other than the initial state,
has a transition on $a$ back to the initial state.

Each of the $M_i$ is linked via their initial states;  $q_0^i$ is linked
to $q_0^{(i+1) \bmod n}$ with a transition on $a$.  There are also
transitions on each letter in $\Sigma \cup \lbrace c \rbrace$
from both $r_i$ and $s_i$ to $s_i$.  There are also transitions on $a$
from both $r_i$ and $s_i$ to $q_0^i$.

The reader should check that $M$ is actually a complete DFA, and
furthermore $M$ accepts all words, except possibly some of
those that end in a word of the form $a x c$, where $x$ is rejected by
some $M_i$.  We now prove that $\suff(L(M)) = \Delta^*$ iff
$\bigcup_{0 \leq i < n} L(M_i) = \Sigma^*$.

Suppose $\suff(L(M)) = \Delta^*$.  Then every word in $\Delta^*$ is a suffix
of some word in $L(M)$.  In particular, this is true for every word of the
form $a w c$, where $w \in \Sigma^*$.   So $y a w c$ is accepted by $M$ for some
$y$ (depending on $w$).  However, every transition on $a$ leads to a 
state of the form $q_0^i$ for some $i$.  Transitions on elements of
$\Sigma$ then keep us inside the copy of $M_i$, and then processing
$c$ leads to either $r_i$ or $s_i$, depending on whether $M_i$ rejects or
accepts $w$, respectively. Since $yawc$ is accepted, this means that
we end in $s_i$, so $w$ is accepted by $M_i$.  Since $w$ was arbitrary,
this shows that every word is accepted by some $M_i$.

On the other other hand, suppose $\bigcup_{0 \leq i < n} L(M_i) = \Sigma^*$.
We need to show each $x \in \Delta^*$ is a suffix of some word accepted
by $M$.   If $x$ contains no $a$'s, then it is accepted by $M$ by a loop
on the initial state $q$.  Otherwise, we can write
$x = y a z$, where $z$ contains no $a$'s.  Then in processing $x$,
reading $a$ leads us to some state of the form $q_0^i$.  
If $z$ also contains no $c$'s, then processing $x$ in its entirety leads
to a state of $M_i$, all of which have been made accepting in our
construction.  Thus $x$ is accepted.  Otherwise, we can write 
$z = v c w$, where $v$ contains no $c$'s.  If $w$ is nonempty, then
processing $x$ leads to the state $s_i$, which is accepting, and so
$x$ is accepted.  Thus we may assume $w$ is empty and $z = vc$ for
some $v \in \Sigma^*$.  If reading $x = y a v c$ leads to $s_i$ for some $i$,
then $x$ is accepted by $M$.  Otherwise, reading $x$ leads to $r_i$.
By hypothesis $v$ is accepted by some $M_j$.
Let $s = (j-i) \bmod n$, and consider $a^s cc x$.  Then reading
$a^s cc$ leads to $s_{(j-i) \bmod n}$.  Hence reading
$a^s cc x$ leads to $s_j$, and it is accepted, and so 
$x \in \suff(L(M))$.  

\end{proof}

\subsection{Universality for factors}

\begin{theorem}
The decision problem \\
	\centerline{\rm Given a DFA $M$ with input alphabet $\Sigma$, is $\fact(L(M)) = \Sigma^*$?} \\
is solvable in polynomial-time.
\end{theorem}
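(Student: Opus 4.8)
The plan is to work directly with the ``generalized NFA'' $N$ accepting $\fact(L(M))$ described in Section~\ref{basic-sec}: trim $M$ down to the set $Q'$ of states that are reachable from $q_0$ and from which some state of $F$ is reachable, keep the (now possibly partial) transition function $\delta'$ inherited from $\delta$, and declare every state of $Q'$ to be both initial and final. Because every state of $N$ is final, a word $w$ is accepted by $N$ exactly when some state $q \in Q'$ \emph{survives} $w$, that is, the run from $q$ on $w$ never falls off $Q'$ (i.e.\ $\delta'(q,w)$ is defined). Equivalently, in the subset automaton started from the full set $Q'$, the set $\delta'(Q',w)$ is nonempty. Hence $\fact(L(M)) = \Sigma^*$ if and only if the subset automaton started from $Q'$ never reaches $\emptyset$ (in particular, $\fact(L(M)) \neq \Sigma^*$ when $L(M)=\emptyset$, since then $Q'=\emptyset$).

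Call a state $q \in Q'$ \emph{mortal} if some word $w$ has $\delta'(q,w)$ undefined, and \emph{immortal} otherwise. The first claim is: if $Q'$ contains an immortal state $q_0$, then $\fact(L(M)) = \Sigma^*$, since then $\delta'(q_0,w)$ is defined for every $w$, so $N$ accepts everything. The second, and main, claim is the converse: if every state of $Q'$ is mortal, then the subset automaton reaches $\emptyset$. Here determinism does the work. Start from $S_0 = Q'$; given a nonempty reachable subset $S_i$, pick any $q \in S_i$ and a word $w$ with $\delta'(q,w)$ undefined. Then $q$ contributes nothing to $\delta'(S_i,w)$, and since $\delta'$ sends each state to at most one state, $|\delta'(S_i,w)| \le |\{p \in S_i : p \text{ survives } w\}| \le |S_i|-1$. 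Iterating, the subset size strictly decreases, so after at most $|Q'|$ stages we reach $\emptyset$; concatenating the words used produces a word not in $\fact(L(M))$.

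Combining the two claims, $\fact(L(M)) = \Sigma^*$ if and only if $Q'$ contains an immortal state, and this is testable in polynomial (indeed linear) time. Compute $Q'$ with two graph searches. Mark a state of $Q'$ as \emph{deficient} if some letter sends it out of $Q'$. A state is mortal precisely when it can reach a deficient state in the digraph on $Q'$ induced by $\delta'$, which is a single backward reachability computation. Report ``universal'' if and only if some state of $Q'$ fails to be mortal.

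The step I expect to be the real work is the converse claim (all states mortal $\Rightarrow$ $\emptyset$ reachable); the rest is bookkeeping. It is exactly here that determinism is indispensable — killing one state of a reachable subset cannot be compensated by others fanning out, which is what fails for NFA's — and it is also here that the link with \v{C}ern\'y's conjecture surfaces: when $\fact(L(M)) \neq \Sigma^*$, the construction above yields an excluded word of length $O(|Q'|^2)$, and for automata whose ``surviving part'' is a permutation automaton with a single missing transition, pinning down the \emph{shortest} excluded word is the same flavour of extremal problem as finding the shortest reset word, so the \v{C}ern\'y bound $(n-1)^2$ is the natural conjectured bound in this setting too.
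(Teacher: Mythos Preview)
Your argument is correct and lands on the same characterization as the paper: $\fact(L(M))=\Sigma^*$ if and only if the trimmed automaton has what you call an \emph{immortal} state, which is exactly the paper's \emph{universal} state (a reachable state from which no dead state is reachable). The first claim is the paper's first lemma verbatim.

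Where you diverge is in the converse. The paper normalizes $M$ to have a single dead state, observes that a word missing from $\fact(L(M))$ is then precisely a synchronizing word for $M$, and invokes as a black box the polynomial-time decision procedure for synchronizability from Volkov's survey. You instead prove the converse directly: if every state is mortal, you kill the current subset one state at a time using determinism, reaching $\emptyset$ in at most $|Q'|$ stages. This is of course the standard proof that a complete DFA with a sink reachable from every state is synchronizing, so the two routes are really the same mathematics viewed from opposite ends; but your presentation is more self-contained (no external citation needed), yields the clean biconditional ``universal $\Leftrightarrow$ some immortal state'' in one stroke, and makes it explicit that the paper's final ``otherwise answer Yes'' branch is never taken. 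The paper's framing, on the other hand, foregrounds the link to synchronizing words and lets one read off the $O(n^2)$ bound on the shortest non-factor directly from the \v{C}ern\'y literature, which feeds Theorem~\ref{fact-shortest}(a). A small cosmetic point: you reuse the symbol $q_0$ for your immortal state, clashing with the initial state of $M$; pick a fresh name.
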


\begin{proof}
Terminology: we say a state $q$ is {\it dead} if no accepting state can be
reached from $q$ via a possibly empty path.  If a DFA has a dead state $d$
then every state reachable from it is also dead, so there is an
equivalent DFA with only one dead state and all transitions from that dead
state lead to itself.

We say a state $r$ is
{\it universal} if no dead state is reachable from it via a possibly empty
path.   We say a state is {\it reachable} if there is some path to it from
the start state.  We say a DFA is {\it initially connected} if all states are
reachable.  A DFA $M = (Q, \Sigma, \delta, q_0, F)$
has a {\it synchronizing word} $w$ if
$\delta(p, w) = \delta(q,w)$ for all states $p, q$.

We need two lemmas.

\begin{lemma}
If a DFA $M$ has a reachable universal state, then $\fact(L(M)) =
\Sigma^*$.
\end{lemma}

\begin{proof}
Let $M = (Q, \Sigma, \delta, q_0, F)$.
Let $q$ be a reachable universal state, and let $x$ be such that
$\delta(q_0, x) = q$.  Consider any word $y$, and let $\delta(q, y) = r$.
Then no dead state is reachable from $r$, for otherwise it would be
reachable from $q$.  So there exists a word $z$ such that $\delta(y, z) =
s$, and $s$ is an accepting state.  Then $\delta(q_0, xyz) = s$, so $xyz$ is
accepted, and hence $y \in \fact(L(M))$.  But $y$ was arbitrary, so
$\fact(L(M)) = \Sigma^*$.
\end{proof}

\begin{lemma}
Suppose the DFA $M$ is initially connected, has no universal
states, and has exactly one dead state.  Then there exists $x\not\in
\fact(L(M))$ if and only if there is a synchronizing word for $M$.
\label{second}
\end{lemma}

\begin{proof}
Suppose $M$ has a synchronizing word $x$. Then there exists a state $q$
such that for all all states $p$ we have $\delta(p,x) = q$.  Since, as
noted above, all transitions from the unique dead state $d$ must go to itself,
we must have $q = d$.
Then for all states $p$ we have $\delta(p,x) = d$.  So $x$ is not
in $\fact(L(M))$, because every path labeled $x$ goes to a state from which
one cannot reach a final state.

Now suppose there is $x \not\in \fact(L(M))$.  Then for all $y, z$ we have
$yxz \not\in L(M)$.  In other words, no matter what state we start in, $xz$
leads to a nonaccepting state.  Then no matter what state we start in x
leads to a state from which no accepting state can be reached.  But
there is only one such state, the dead state $d$.  So it must be the case
that $x$ always leads to $d$, and so $x$ is a synchronizing word.
\end{proof}

We can now prove the theorem.
The following algorithm decides whether
$\fact(L(M)) = \Sigma^*$ in polynomial time:

\begin{enumerate}
\item Remove all states not reachable from the start state by a (possibly
empty) directed path.

\item Identify all dead states via depth-first search.  If $M$ has at least
one dead state, modify $M$ to replace all dead states with a single dead
state $d$.

\item Identify all universal states via depth-first search.  If there is
a universal state, answer ``Yes" and halt.

\item Using the polynomial-time
procedure mentioned in Volkov's survey \cite{Volkov},
decide if there is a synchronizing
word.  If there is, answer ``No"; otherwise answer ``Yes".
\end{enumerate}

To see that it works, we already observed that we can replace
all dead states by a single dead state 
without changing the language accepted by $M$. Furthermore,
if a DFA has no universal states, then it has at least one dead state
(for otherwise every state would be universal).    So when we reach
step 4 of the algorithm, we are guaranteed that $M$ has exactly one dead
state and we can apply Lemma~\ref{second}.
\end{proof}

\subsection{Universality for subwords}

This is covered in section~\ref{subword} below.

\section{Universality for NFA's}

In this section we consider the same problems as before, but now we
represent our regular language by an NFA.  Some of these results
essentially appeared in \cite{Kao1}, but with different proofs and some
in weaker form.

\subsection{Universality for prefixes}

\begin{theorem}
The decision problem \\
\centerline{\rm Given an NFA $M$ with input alphabet $\Sigma$, is $\pref(L(M)) = \Sigma^*$?} \\
is PSPACE-complete.
\end{theorem}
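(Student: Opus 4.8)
The plan is to establish membership in PSPACE and PSPACE-hardness separately. For membership, I would use the constructions recalled in Section~\ref{basic-sec}: given an NFA $M$ accepting $L$, we can build an NFA $M'$ with the same state set (plus possibly one new state) accepting $\pref(L(M))$, simply by taking as final states all states from which some original final state is reachable. Since universality for NFA's is in PSPACE by the Meyer--Stockmeyer result cited in the introduction, testing $\pref(L(M)) = \Sigma^*$ is also in PSPACE.

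For PSPACE-hardness, the natural approach is a reduction from NFA universality itself, which is PSPACE-complete. Given an arbitrary NFA $N$ over $\Sigma$, I want to build an NFA $M$ over some alphabet $\Delta$ so that $\pref(L(M)) = \Delta^*$ if and only if $L(N) = \Sigma^*$. The key idea is to force the ``prefix'' operation to not give anything for free: introduce a fresh endmarker symbol $\# \notin \Sigma$, and arrange that $M$ accepts exactly $\{\, w\# \;:\; w \in L(N) \,\}$, i.e., run $N$ on the $\Sigma$-part and accept only after reading a single $\#$ from $N$'s (old) accepting states, with no outgoing transitions from the new final state. Then a word $x\in\Delta^*$ lies in $\pref(L(M))$ iff either $x = w$ for some $w \in \Sigma^*$ that is a prefix of some word in $L(N)$ — which, since $\#$ can always be appended, holds iff $x \in \Sigma^*$ is itself extendable, and more usefully — iff $x = w\#$ with $w \in L(N)$. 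So $\pref(L(M)) \supseteq \Sigma^*\#$ iff $L(N) = \Sigma^*$; one then checks the remaining words of $\Delta^*$ (those containing $\#$ in a non-final position, or multiple $\#$'s) are handled by adding, say, a universal ``sink-prefix'' gadget only for those, or more simply by observing $\Delta^* \setminus \Sigma^*\#$ splits into easily-controlled pieces. The cleanest version: let $\Delta = \Sigma \cup \{\#\}$, and design $M$ so that $L(M) = \{w\# : w\in L(N)\} \cup \{w\#\#v : w,v\in\Sigma^*\}$ (the second set via a fresh always-accepting loop entered on the second $\#$), so that every word either ending in $\#\#\cdots$ is a prefix trivially, every word in $\Sigma^*$ is a prefix of $\cdot\#\#$, and the only words whose prefix-status is nontrivial are exactly those of the form $w\#$ with $w\in\Sigma^*$ and $w\notin\Sigma^*\#\cdots$; such $w\#$ is a prefix of a word in $L(M)$ iff $w\in L(N)$.

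The main obstacle, as in the suffix reduction already given in the paper, will be the bookkeeping needed to verify that $M$ is genuinely an NFA (or NFA-$\epsilon$ that can be cleaned up without state blowup) and that \emph{every} word of $\Delta^*$ not of the diagnostic form $w\#$ is automatically in $\pref(L(M))$, so that the only obstruction to universality of $\pref(L(M))$ is the failure of some $w \in \Sigma^*$ to lie in $L(N)$. I would handle this by drawing the analogue of Figure~\ref{suff1}: a fresh start state with $\Sigma$-$\epsilon$-links into $N$'s start state(s), a $\#$-transition from $N$'s accepting states to a new final state $f$, and from $f$ a $\#$-transition into an all-accepting $\Sigma\cup\{\#\}$-loop; then argue that the prefix-closure of this language is $\Delta^*$ precisely when $L(N)=\Sigma^*$. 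The reduction is clearly polynomial-time (it adds $O(1)$ states and $O(|\Sigma|)$ transitions), completing the proof that the problem is PSPACE-complete.
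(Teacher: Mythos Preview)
Your PSPACE membership argument is fine and matches the paper's implicit reasoning. The hardness reduction, however, is genuinely broken as written.

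Take your ``cleanest version'' language $L(M) = \{w\# : w\in L(N)\} \cup \{w\#\#v : w,v\in\Sigma^*\}$. You claim the diagnostic words are those of the form $w\#$ with $w\in\Sigma^*$, and that $w\#\in\pref(L(M))$ iff $w\in L(N)$. But $w\#$ is a prefix of $w\#\#$, which lies in the second component for \emph{every} $w\in\Sigma^*$; so $w\#$ is always in $\pref(L(M))$ and encodes nothing about $N$. Worse, a word of the form $w\#a$ with $a\in\Sigma$ is a prefix of neither component (the first ends in a single $\#$, the second has $\#\#$ immediately after the $\Sigma^*$-block), so $w\#a\notin\pref(L(M))$ for any $w$. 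Hence $\pref(L(M))\ne\Delta^*$ regardless of $L(N)$, and the reduction always answers ``no''. Your later automaton description (only a $\#$-edge out of $f$) has the same defect: $w\#a$ is a dead end. The two descriptions do not even define the same language, but both fail for the same structural reason.

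The idea is easily repaired: take $L(M)=L(N)\cdot\#\cdot\Delta^*$ (i.e., from every accepting state of $N$ add a $\#$-edge to a single new accepting state with self-loops on all of $\Delta$, and make $N$'s old final states non-final). Then for $w\in\Sigma^*$ and $y\in\Delta^*$ one has $w\#y\in\pref(L(M))$ iff $w\in L(N)$, and $w\in\pref(L(M))$ iff $w\in\pref(L(N))$; so $\pref(L(M))=\Delta^*$ iff $L(N)=\Sigma^*$.

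For comparison, the paper takes a much shorter route: it simply observes that reversing the DFA built in the suffix-universality reduction of Section~\ref{suffix-dfa} yields an NFA whose prefix-universality is equivalent to $\bigcup_i L(M_i)=\Sigma^*$. This exploits the duality $\pref(L)=\suff(L^R)^R$ and avoids any new gadgetry. Your (corrected) approach reduces from NFA universality rather than from the union-of-DFAs problem; it is more self-contained but requires getting the endmarker bookkeeping right, which you did not.
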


\begin{proof}
In fact, this decision problem
is even PSPACE-complete when $M$ is restricted to be of the
form $A^R$, where $A$ is a DFA.   To see this, note that our construction
for suffix universality for DFA's given above, when reversed, gives
an NFA $M$ with the property that $\pref(L(M)) = \Sigma^*$ if and only
if $\bigcup_{0 \leq i < n} L(M_i) = \Sigma^*$.  
\end{proof}

\subsection{Universality for suffixes}

Already handled in section~\ref{suffix-dfa}.

\subsection{Universality for factors}
\label{factor-nfa}

Although, as we have seen, universality for $\fact(L(M))$ is testable in
polynomial-time when $M$ is a DFA, the same decision problem becomes 
PSPACE-complete when $M$ is an NFA.  To see this, we again reduce
from the universality problem for $n$ DFA's.  Figure~\ref{fact2} illustrates
the construction.  Given the DFA's $M_0, M_1, \ldots, M_{n-1}$, each
with input alphabet $\Sigma$, we
create a new NFA as illustrated.  We assume that $\Sigma$ does not contain
the letters $a, c$ and set $\Delta := \Sigma \ \bigcup \ \lbrace a, c \rbrace$.
Restricting our attention to the states
$q, r, s$ we get an NFA that accepts all words not having a word of the
form $a \Sigma^* c$ as a factor.  On the other hand, a word 
of the form $a w c$ for $w \in \Sigma^*$ is a factor of a word in $L(M)$ 
iff $w \in L(M_i)$ for some $i$.  We now claim that 
$\fact(L(M)) = \Delta^*$ iff $\bigcup_{0 \leq i < n} L(M_i) = \Sigma^*$.

    Suppose $\fact(L(M)) = \Delta^*$.  Then in particular
every factor of the form $a w b$, with $w \in \Sigma^*$ is a factor
of a word of $M$.  But the only way such a word can be a factor
is by entering one of the $M_i$ components on $a$ and exiting on $c$,
and there are only transitions on $c$ on states that were originally
final in $M_i$.  So $w$ must be accepted by some $M_i$.  Since $w$ was
arbitrary, we have $\bigcup_{0 \leq i < n} L(M_i) = \Sigma^*$.

    On the other hand, suppose $\bigcup_{0 \leq i < n} L(M_i) = \Sigma^*$.
We claim every word $x$ in $\Delta^*$ is in $\fact(L(M))$.  To see this,
note that if $x$ contains no subword of the form $a w c$, with $c \in \Sigma^*$,
then it is accepted by a path starting from state $q$ and only involving
the states $q, r, $ and $s$.  Otherwise $x$ contains a subword of
form $awc$.  Identify all the positions of $c$'s in $x$ and write
$x = x_1 c x_2 c \cdots x_{n-1} c x_n$, where each $x_i \in \Sigma \ \cup
\ \lbrace a \rbrace$.  
If an $x_i$ contains no $a$'s,
there is a path from $q$ to $q$ labeled $x_i c$. 
Otherwise $x_i$ contains at least one $a$.  Identify the position of
the last $a$ in $x_i$, and write $x_i = y_i a z_i$, 
where $z_i$ contains no $a$'s.   Then starting in $q$ and reading $y_i$
takes us to either state $q, r,$ or $s$; reading the $a$ takes us to
$t$ and then to any $q_0^j$.  Since $z_i \in \Sigma^*$, and since
$\bigcup_{0 \leq i < n} L(M_i) = \Sigma^*$, we can choose the particular
$M_j$ that accepts $z_i$.  Then reading $c$ takes us back to state $q$.
By this argument we see that $xc$ is always accepted by $M$, and hence
$x$ is a factor of $L(M)$.  

\begin{figure}[H]
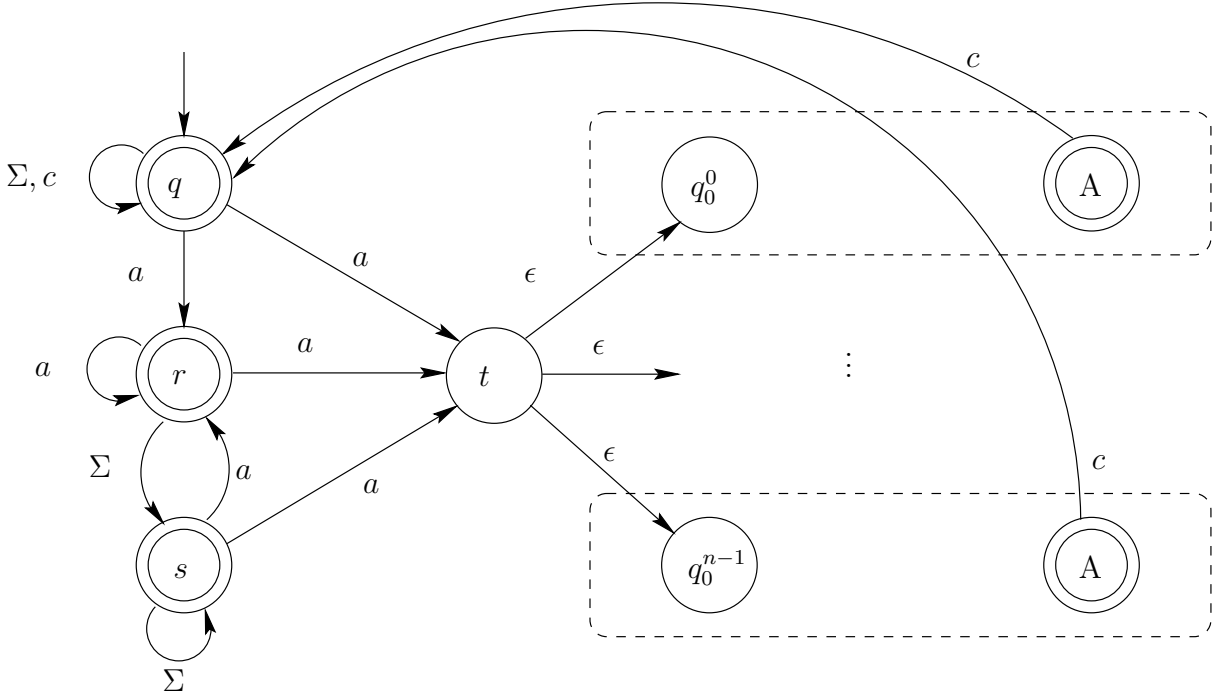

\begin{center}
\input univ2.pstex_t
\end{center}
\caption{The reduction for factors}
\label{fact2}
\end{figure}

\begin{theorem}
The decision problem\\
\centerline{\rm Given an NFA $M$ with input alphabet $\Sigma$, is $\fact(L(M)) = \Sigma^*$?}\\
is PSPACE-complete.
\end{theorem}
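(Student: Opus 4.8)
The plan is to prove both containments. The \emph{membership in PSPACE} is immediate from Section~\ref{basic-sec}: given the NFA $M$, we construct in polynomial time an NFA $M'$ accepting $\fact(L(M))$ (remove unreachable and dead states, make the rest both initial and final, then eliminate $\epsilon$-transitions if any are introduced), and then invoke the fact, due to Meyer and Stockmeyer, that universality for NFA's is in PSPACE. So the real content is PSPACE-hardness, and for that I would reduce from the problem ``given DFA's $M_0, \ldots, M_{n-1}$ over $\Sigma$, is $\bigcup_{0 \le i < n} L(M_i) = \Sigma^*$?'', which is PSPACE-complete by the Lemma in the introduction.

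The construction is exactly the one already displayed in the text preceding the statement and pictured in Figure~\ref{fact2}. First I would normalize each $M_i$ so it has no transitions into its initial state (add a fresh start state if necessary). Then, over the enlarged alphabet $\Delta = \Sigma \cup \{a,c\}$, I build the NFA $M$ by taking disjoint copies of all the $M_i$, making \emph{every} state of each $M_i$ nonfinal except that each originally-final state of $M_i$ gets a transition on $c$, and adding a small ``gadget'' of states $q, r, s$: from $q$ one loops on every letter of $\Sigma \cup \{c\}$, on $a$ one goes to a state $t$ that nondeterministically branches on $a$ into each initial state $q_0^j$, a $c$ out of any originally-final state of an $M_i$ returns to $q$, state $s$ is a universal sink (loops on all of $\Delta$) reached by the ``wrong'' $c$-moves, and so on — precisely as in the figure. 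The key structural facts to verify are: (i) $M$ is a legal NFA with input alphabet $\Delta$; (ii) restricted to $q,r,s$, $M$ accepts every word of $\Delta^*$ having no factor in $a\Sigma^*c$; and (iii) a word $awc$ with $w \in \Sigma^*$ is a factor of some word accepted by $M$ iff $w$ is accepted by some $M_i$, since the only way an $a$ followed later by a $c$ with only $\Sigma$-letters in between can thread through a copy of $M_i$ is to enter at $q_0^j$ and be at an originally-final state when the $c$ is read.

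With the gadget in place, I would prove the equivalence $\fact(L(M)) = \Delta^*$ iff $\bigcup_{0\le i<n} L(M_i) = \Sigma^*$ exactly as sketched in the running text. For the forward direction: if every word of $\Delta^*$ is a factor, then in particular each $awc$ ($w\in\Sigma^*$) is a factor of an accepted word; by (iii) this forces $w \in L(M_i)$ for some $i$, and since $w$ ranges over all of $\Sigma^*$ we get the union is $\Sigma^*$. For the reverse direction: assuming the union is $\Sigma^*$, take an arbitrary $x \in \Delta^*$ and show $xc \in \fact(L(M))$ (hence $x \in \fact(L(M))$) by reading $x$ from state $q$: write $x = x_1 c x_2 c \cdots x_{k-1} c x_k$ with each $x_i \in (\Sigma\cup\{a\})^*$; for blocks with no $a$ loop at $q$ and return on the separating $c$; for a block containing an $a$, factor it as $y_i a z_i$ with $z_i \in \Sigma^*$, read $y_i$ staying in $\{q,r,s\}$, take $a$ to $t$ and branch into the $M_j$ that accepts $z_i$, read $z_i$ ending at an originally-final state, and take the next $c$ back to $q$. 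Concatenating, $xc$ is accepted.

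The main obstacle is simply bookkeeping: because Figure~\ref{fact2} is not reproduced here, I must reconstruct the exact transition structure of the gadget (the roles of $r$ and $s$, which $c$-moves are ``good'' versus ``bad'', that $s$ is a universal sink so that no spurious factor is blocked, and that the nondeterministic $a$-move into $t$ really does reach every $q_0^j$) so that claims (ii) and (iii) hold on the nose; once that is pinned down, the two directions of the equivalence are routine, and combined with the PSPACE upper bound they give PSPACE-completeness. The reduction is clearly polynomial-time computable, so nothing further is needed.
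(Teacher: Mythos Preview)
Your proposal is correct and follows the paper's own argument essentially verbatim: the PSPACE upper bound via the $\fact$ construction of Section~\ref{basic-sec} plus Meyer--Stockmeyer, and the hardness reduction from the union-universality problem for $n$ DFA's using the gadget of Figure~\ref{fact2}, with the same two-direction analysis of $\fact(L(M)) = \Delta^*$ versus $\bigcup_i L(M_i) = \Sigma^*$. The only caveat is cosmetic: the normalization step (no transitions into initial states of the $M_i$) is borrowed from the suffix construction and is not actually invoked in the factor reduction, so you can drop it.
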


As we mentioned in the introduction, this result has an interesting
interpretation in terms of Boolean matrices.  Given an NFA $M = (Q,\Sigma,\delta,q_0,F)$,
we can form $|\Sigma|$ different matrices $M_a$, for each $a \in \Sigma$,
as follows:  $M_a$ has a $1$ in row $i$ and column $j$ 
if $q_j \in \delta(q_i, a)$, and a $0$ otherwise.  Then it is easy
to see that for all words $w = c_1 c_2 \cdots c_k$, that
$M_w := M_{c_1} M_{c_2} \cdots M_{c_k}$ has a $1$ in row $i$ and column
$j$ iff $q_j \in \delta(q_i, w)$.  

Assume that $M$ is an NFA in which every state is reachable from the
start state and that a final state can be reached from every state.
(If $M$ does not fulfill these conditions, we can simply delete
the appropriate states.)  Then form $M_a$ for each $a \in \Sigma$.
We claim that some product of the $M_a$
equals the all-zeros matrix iff $\fact(L(M)) \not= \Sigma^*$.  
For suppose there is some product, say $M_y$ for $y = c_1 \cdots c_k$,
that equals the all-zeros matrix.  Then no matter what state we start in,
reading $y$ takes us to no state, so $xyz$ is rejected for all $x, z$.
Hence $y \not\in \fact(L(M))$.  On the other hand, if 
$\fact(L(M)) \not= \Sigma^*$, then there must be some
$y \not\in \fact(L(M))$.  We claim $M_y$ is the all-zeros matrix.
If not, there exist $i,j$ such that $M_y$ has a $1$ in row $i$ and
column $j$.  Then since every state is reachable from the start
state, there exists $x$ such that $\delta(q_0, x) = q_i$.
Since a final state can be reached from every state, there exists
$z$ such that $\delta(q_j, z) \in F$.  Then $\delta(q_0,xyz) \in F$,
so $M$ accepts $xyz$ and $y \in \fact(L(M))$, contradicting our assumption.

We have therefore shown

\begin{corollary}
The decision problem\\
{\rm Given a finite list of square Boolean matrices of the same dimension,
is some product equal to the all-zeros matrix?} \\
is PSPACE-complete.
\label{boolean}
\end{corollary}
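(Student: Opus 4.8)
The plan is to read the corollary off the preceding theorem --- that deciding $\fact(L(M)) = \Sigma^*$ for an NFA $M$ is PSPACE-complete --- together with the translation between the transition relation of an NFA and products of its associated Boolean matrices that was recorded in the paragraph above. Two things remain to be checked: that the Boolean-matrix problem lies in PSPACE, and that it is PSPACE-hard.

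For membership in PSPACE, let $A_1,\dots,A_k$ be the given $d\times d$ Boolean matrices; we must decide whether some nonempty product of them equals the all-zeros matrix $\mathbf{0}$. A nondeterministic machine solves this in polynomial space: it keeps on its work tape a single $d\times d$ Boolean matrix $P$, initialised to some $A_i$, and at each step either accepts if $P=\mathbf{0}$, or else guesses an index $j$ and overwrites $P$ by $PA_j$. This uses only $O(d^2)$ cells; the computation may run for exponentially many steps, but that is irrelevant for the space bound. Since $\mathrm{NPSPACE}=\mathrm{PSPACE}$ by Savitch's theorem, the problem is in PSPACE. (Alternatively, because there are only $2^{d^2}$ distinct $d\times d$ Boolean matrices, a zero product exists iff one of length at most $2^{d^2}$ does, and a depth-first reachability search with that length bound also stays in polynomial space.)

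For PSPACE-hardness I would reduce from the complement of the NFA factor-universality problem, that is, from the question ``given an NFA $M$, is $\fact(L(M))\neq\Sigma^*$?''; this is PSPACE-complete by the preceding theorem together with the closure of PSPACE under complementation. Given $M=(Q,\Sigma,\delta,q_0,F)$, first trim it in polynomial time, deleting every state not reachable from $q_0$ and every state from which no state of $F$ is reachable. If no states remain then $L(M)=\emptyset$, so $\fact(L(M))=\emptyset\neq\Sigma^*$, and we output a fixed yes-instance of the Boolean-matrix problem, say the single matrix $[0]$. Otherwise we form, exactly as in the construction above, the $|\Sigma|$ Boolean matrices $M_a$ of dimension $|Q|$, so that $(M_w)_{ij}=1$ iff $q_j\in\delta(q_i,w)$. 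As argued in the text, for the trimmed automaton a product $M_y$ equals $\mathbf{0}$ iff reading $y$ from every state leads to no state, and this holds iff $y\notin\fact(L(M))$. Hence some product of the $M_a$ equals $\mathbf{0}$ iff $\fact(L(M))\neq\Sigma^*$; since $M\mapsto\{M_a : a\in\Sigma\}$ is computable in polynomial time, this is a legitimate many-one reduction and the corollary follows.

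The genuine obstacle is the upper bound rather than the hardness: the set of matrices obtainable as products is finite but of size exponential in $d$, so a naive search over it is not obviously in polynomial space, and one must invoke $\mathrm{NPSPACE}=\mathrm{PSPACE}$ (or bound the length of a shortest zero product). The hardness half is then essentially bookkeeping, the one substantive point being the equivalence ``$M_y=\mathbf{0}$ iff $y\notin\fact(L(M))$'': this is precisely where the trimming step is used, since it is reachability of $q_i$ from $q_0$ and of $F$ from $q_j$ that let one convert a nonzero entry of $M_y$ into an accepted word $xyz$.
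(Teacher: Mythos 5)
Your proposal is correct and follows essentially the same route as the paper: the hardness half is exactly the paper's argument (trim the NFA so every state is reachable and co-reachable, form the transition matrices $M_a$, and use the equivalence ``$M_y$ is the all-zeros matrix iff $y \notin \fact(L(M))$'' to reduce from factor non-universality for NFA's). The only difference is that you spell out the PSPACE membership argument (guess-and-multiply in $O(d^2)$ space plus Savitch's theorem) and the degenerate case where trimming empties the automaton, both of which the paper leaves implicit.
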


\subsection{Universality for subwords}
\label{subword}

     We now consider the problem of determining, given an NFA $M$,
whether $\subw(L(M)) = \Sigma^*$.  

\begin{lemma}
Let $M = (Q,\Sigma, \delta, q_0, F)$ be an NFA such that (a)
every state is reachable from $q_0$ and (b) a final state is reachable
from every state.  Then
$\subw(L(M)) = \Sigma^*$ if and only if
the transition diagram of $M$ has a strongly connected component $C$
such that, for each letter $a \in \Sigma$, there are two states of
$C$ connected by an edge labeled $a$.
\label{subword-lemma}
\end{lemma}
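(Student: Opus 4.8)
The plan is to work throughout with the $\epsilon$-closure construction of Section~\ref{basic-sec}: a word $w=b_1 b_2\cdots b_\ell$ lies in $\subw(L(M))$ precisely when there is an accepting path of $M$ (from $q_0$ to a state of $F$) along which, reading from left to right, one can select edges carrying the labels $b_1,b_2,\dots,b_\ell$ in this order. I will establish the two implications separately, writing $\Sigma=\{a_1,\dots,a_k\}$.

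For the ``if'' direction, suppose the strongly connected component $C$ exists, and for each letter $a$ fix an edge inside $C$ labelled $a$, say from $p_a$ to $q_a$ (possibly $p_a=q_a$). Given any $w=b_1\cdots b_\ell\in\Sigma^*$, choose a state $p\in C$; by hypothesis (a) there is $u$ with $p\in\delta(q_0,u)$, and by hypothesis (b) there is $u'$ with $\delta(p,u')\cap F\neq\emptyset$. Since $C$ is strongly connected I can build a closed walk based at $p$ that first goes, inside $C$, to $p_{b_1}$, crosses the edge labelled $b_1$ to $q_{b_1}$, then goes to $p_{b_2}$, crosses the edge labelled $b_2$, and so on, and finally returns from $q_{b_\ell}$ to $p$; if $z$ is the label of this walk then $w$ is a subword of $z$ and $uzu'\in L(M)$, so $w\in\subw(L(M))$. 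As $w$ was arbitrary, $\subw(L(M))=\Sigma^*$.

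For the ``only if'' direction, assume $\subw(L(M))=\Sigma^*$ and set $m=|Q|+1$. Then $(a_1 a_2\cdots a_k)^m\in\subw(L(M))$, so there is an accepting path $\pi$ of $M$ and a choice of edges $e_1,\dots,e_{km}$ along $\pi$, in the order they occur, with $e_{jk+i}$ labelled $a_i$ (for $0\le j<m$ and $1\le i\le k$). Cut this selection into $m$ blocks, block $j$ being $e_{jk+1},\dots,e_{jk+k}$, and let $\beta_j$ be the source state of the first edge of block $j$. Since $m>|Q|$, two of $\beta_0,\dots,\beta_{m-1}$ coincide, say $\beta_{j_1}=\beta_{j_2}$ with $j_1<j_2$; the stretch of $\pi$ running from the first occurrence to the second is then a closed walk that contains every edge of block $j_1$, hence an edge labelled $a_i$ for every $i$. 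Because every vertex and every edge of a closed walk lies in a single strongly connected component (any two of its vertices are mutually reachable along the walk), that component is the desired $C$.

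The one place that needs care is the pigeonhole step in the second direction: the test word must be long enough to force a repeated state, yet the repetition has to trap a \emph{complete} occurrence of all of $\Sigma$ inside the resulting cycle. Taking the $k$-letter word $a_1\cdots a_k$ to the power $|Q|+1$ and running the pigeonhole on the $m$ block-boundary states $\beta_0,\dots,\beta_{m-1}$ (rather than on the states of $\pi$ one at a time) is exactly what arranges both conditions simultaneously; everything else is routine bookkeeping with the $\subw$ characterization recalled above.
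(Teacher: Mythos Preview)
Your proof is correct. The ``if'' direction matches the paper's argument. For the converse, however, you take a genuinely different route: you argue \emph{directly}, feeding the word $(a_1\cdots a_k)^{|Q|+1}$ to the hypothesis $\subw(L(M))=\Sigma^*$ and pigeonholing on the $|Q|+1$ block-start states $\beta_0,\dots,\beta_{m-1}$ along the resulting accepting path to extract a closed walk---hence a single strongly connected component---that carries an edge of every label. The paper instead argues the \emph{contrapositive}: assuming no SCC carries every letter, it uses the condensation of the transition graph into a DAG of SCCs to observe that any accepting path passes through at most $N$ components (where $N$ is the number of SCCs), and since each component lacks some letter, each block $a_1\cdots a_k$ forces a move to a later component, so $(a_1\cdots a_k)^{N+1}$ cannot occur as a subword of any accepted word. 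Your argument is arguably cleaner for establishing the lemma itself and avoids the DAG bookkeeping; the paper's contrapositive, on the other hand, produces an explicit omitted word with a length bound tied to $N\le|Q|$, which the paper then reuses for its later theorem on the length of the shortest missing subword.
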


\begin{proof}
Suppose the transition diagram of $M$ has a reachable strongly connected
component $C$ with the given property.  Then to obtain any word $w$ as a subword
of a word in $L(M)$, use a word to enter the strongly connected component $C$,
and then travel successively to states of $C$ where there is an arrow out 
labeled with each successive letter of $w$.  Finally, travel to a final
state.  

For the converse, assume $\subw(L(M)) = \Sigma^*$, but the transition
diagram of $M$ has no strongly connected component with the given
property.  Then since any directed graph can be decomposed into a
directed acyclic graph on its strongly connected components,
we can write any $w \in L(M)$ as $x_1 y_1 x_2 y_2 \cdots x_n$, where
$x_i$ is the word traversed inside a strongly connected component,
and $y_i$ is the letter on an edge linking two strongly connected
components.  Furthermore, $n \leq N$, where $N$ is the total
number of strongly connected components.
If $\Sigma = \lbrace a_1, a_2, \ldots, a_k \rbrace$,
then $\subw(L(M))$ omits the word
$w = (a_1 a_2 \cdots a_k)^{N+1}$,
because the first component encountered has no transition on
some letter $a_i$, so reading $a_1 a_2 \cdots a_k$ either
forces a transition to (at least) the next component of the DAG, or in the
case of an NFA, ends the computational path with no move.
Since there are only $N$ strongly connected components, we cannot
have $w$ as a subword of any accepted word.
\end{proof}
     
We can now prove

\begin{theorem}
Given an NFA $M$ with input alphabet $\Sigma$, we can determine if $\subw(L(M)) = \Sigma^*$ in linear
time.
\end{theorem}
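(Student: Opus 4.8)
The plan is to turn the characterization in Lemma~\ref{subword-lemma} into a linear-time algorithm. First I would preprocess $M$ so that the hypotheses of the lemma hold: run a depth-first (or breadth-first) search from $q_0$ to find all states reachable from $q_0$, and run a second search on the reverse transition diagram from the set $F$ to find all states from which a final state is reachable; delete every state that fails either test. This takes linear time in the size of $M$ (number of states plus number of transitions). If the start state itself is deleted, then $L(M) = \emptyset$, so $\subw(L(M)) = \emptyset \neq \Sigma^*$ (assuming $\Sigma \neq \emptyset$; the degenerate case $\Sigma = \emptyset$ is trivial), and we answer ``No''. Otherwise the pruned automaton satisfies conditions (a) and (b) of Lemma~\ref{subword-lemma} and accepts the same language, hence has the same subword set.

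Next I would compute the strongly connected components of the (pruned) transition diagram, using Tarjan's or Kosaraju's algorithm, which runs in linear time. Then, for each component $C$, I need to check the property from the lemma: for every letter $a \in \Sigma$, there is an edge labeled $a$ between two states of $C$ (equivalently, a transition $q_j \in \delta(q_i,a)$ with $q_i, q_j \in C$). This can be done by iterating once over all transitions: for each transition on letter $a$ whose source and target lie in the same component $C$, mark the pair $(C,a)$ as ``covered''. After this single pass, component $C$ has the desired property precisely when all $|\Sigma|$ pairs $(C,a)$ are covered; this check costs $O(|\Sigma|)$ per component, and $\sum_C |\Sigma| = N|\Sigma|$, which is dominated by the size of the input (since each component is nonempty, $N$ is at most the number of states). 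By Lemma~\ref{subword-lemma}, $\subw(L(M)) = \Sigma^*$ if and only if at least one component is fully covered, so we answer ``Yes'' in that case and ``No'' otherwise.

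The correctness is immediate from Lemma~\ref{subword-lemma} once the preprocessing is justified, so the only real content is the running-time bookkeeping. The step I expect to need the most care is making sure every stage is genuinely linear: the two reachability searches, the SCC decomposition, and the covering pass are each linear, and crucially the final per-component check is charged against transitions and the (at most linear) number of components rather than naively costing $N \cdot |\Sigma|$ as a separate term. One subtlety worth a sentence is that ``reachable strongly connected component'' in the lemma is automatic here, since after pruning every remaining state is reachable from $q_0$, so every remaining component is reachable. With these observations in place the theorem follows.

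\begin{proof}
First handle the trivial case $\Sigma = \emptyset$ separately. Otherwise, in linear time delete from $M$ all states not reachable from $q_0$ and all states from which no final state is reachable; call the result $M'$. This does not change $L(M)$, hence does not change $\subw(L(M))$, and $M'$ satisfies hypotheses (a) and (b) of Lemma~\ref{subword-lemma} (if $M'$ is empty, output ``No''). In linear time compute the strongly connected components of $M'$ using a standard algorithm. Make one pass over the transitions of $M'$: for each transition from $q_i$ to $q_j$ on letter $a$ with $q_i$ and $q_j$ in the same component $C$, mark the pair $(C,a)$. Finally, for each component $C$ check in time $O(|\Sigma|)$ whether all pairs $(C,a)$, $a \in \Sigma$, are marked; since each component is nonempty, the number of components is at most $|Q|$, so the total cost of these checks is linear. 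Every remaining component is reachable from $q_0$, so by Lemma~\ref{subword-lemma}, $\subw(L(M)) = \Sigma^*$ if and only if some component has all its pairs marked. Output ``Yes'' in that case and ``No'' otherwise. The entire procedure runs in linear time.
\end{proof}
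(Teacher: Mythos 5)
Your proposal is correct and follows essentially the same route as the paper's own proof: prune states not reachable from $q_0$ and states that cannot reach a final state, compute the strongly connected components in linear time, and check each component for an internal edge on every letter, appealing to Lemma~\ref{subword-lemma} for correctness. Your version merely spells out the bookkeeping (the marking of pairs $(C,a)$ and the empty-language case) in more detail than the paper does.
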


\begin{proof}
First, use depth-first search to remove all states not reachable from the
start state.  Next, use depth-first search (on the transition diagram of
$M$ with arrows reversed) to remove all states from which one cannot
reach a final state.  Next, determine the strongly connected components
of the transition diagram of $M$ (which can be done in linear time 
\cite{Tarjan}).
Finally, examine all the edges of each strongly connected component $C$
to see if for all $a \in \Sigma$, there is an edge labeled $a$.
\end{proof}

\section{Shortest counterexamples}

We now turn to the following question:  given that
$\pref(L(M)) \not= \Sigma^*$, what is the length of the shortest
word in $\overline{\pref(L(M))}$, as a function of the number of
states of $M$?  We can ask the same question for suffixes, factors,
and subwords.

\begin{theorem}
Let $M$ be a DFA or NFA with $n$ states.
Suppose $\pref(L(M)) \not= \Sigma^*$.  Then the shortest word
in $\overline{\pref(L(M))}$ is

\begin{itemize} 
\item[(a)] of length $\leq n-1$ if $M$ is a DFA, and there exist
examples achieving $n-1$;

\item[(b)] of length $\leq 2^n$ if $M$ is an NFA, and there exist
examples achieving $2^{cn}$ for some constant $c$.
\end{itemize}

\label{pref-shortest}
\end{theorem}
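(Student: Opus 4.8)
\textit{Upper bounds and the lower bound for (a).} For a DFA $M=(Q,\Sigma,\delta,q_0,F)$ with $n$ states, the construction of Section~3.1 produces the $n$-state DFA $M'=(Q,\Sigma,\delta,q_0,Q\setminus F')$ with $L(M')=\overline{\pref(L(M))}$. If this language is nonempty, a shortest word accepted by an $n$-state DFA labels a path from the start state to a final state visiting no state twice --- otherwise we could excise a cycle and shorten it --- so it has length at most $n-1$; this is the bound in (a). For an NFA $M$ we instead build, as in Section~2, an $n$-state NFA for $\pref(L(M))$ by letting the new final states be those from which $F$ is reachable; determinizing gives a complete DFA with at most $2^n$ states, complementation preserves the number of states, and the same no-repeated-state argument gives a shortest word of length at most $2^n-1\le 2^n$, the bound in (b). For the lower bound in (a), take over $\{a\}$ the DFA with states $q_0,\dots,q_{n-1}$, transitions $\delta(q_i,a)=q_{i+1}$ for $i<n-1$ and $\delta(q_{n-1},a)=q_{n-1}$, and $F=\{q_{n-2}\}$, so $L(M)=\{a^{n-2}\}$ and $\pref(L(M))=\{a^i:0\le i\le n-2\}$; the shortest word outside $\pref(L(M))$ is then $a^{n-1}$, and by the bound just proved no $n$-state DFA does worse.

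\textit{The lower bound for (b).} The idea is to make $\overline{\pref(L(M))}$ the cylinder generated by an exponentially long ``binary-counting trace''. Fix $k$, let $\Sigma=\{0,1,\#,\$\}$, let $c_j\in\{0,1\}^k$ be the $k$-bit binary representation of $j$ written least-significant-bit first, and put
\[
w_k=c_0\,\#\,c_1\,\#\,\cdots\,\#\,c_{2^k-1}\,\$,\qquad |w_k|=(k+1)2^k+1.
\]
I will construct an NFA $N_k$ with $O(k)$ states such that $L(N_k)=\overline{w_k\Sigma^*}$, i.e.\ $N_k$ accepts exactly the words that do \emph{not} begin with $w_k$. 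Because the set of words not beginning with a fixed word is prefix-closed, we get $\pref(L(N_k))=L(N_k)$; this set is not $\Sigma^*$ (it omits $w_k$), and the shortest word of $\overline{\pref(L(N_k))}=w_k\Sigma^*$ is $w_k$ itself. Padding $N_k$ with unreachable states up to any target size $n\ge|N_k|$ and balancing $k$ against the $O(k)$ state bound then yields, for all large $n$, an $n$-state NFA with shortest counterexample of length $(k+1)2^k+1=2^{\Omega(n)}\ge 2^{cn}$ for a suitable fixed $c>0$. The automaton $N_k$ accepts $w$ by guessing a \emph{flaw} certifying $w\notin w_k\Sigma^*$: (i) $w$ has no $\$$; or, with $p$ the position of the first $\$$ in $w$, (ii) $w[1..p-1]$ is not of the form $(\{0,1\}^k)(\#\{0,1\}^k)^*$ (a misplaced $\#$ or a truncated block); (iii) the first block is not $0^k$; (iv) two consecutive blocks $B,B'$ satisfy $B'\ne B+1\pmod{2^k}$; (v) some block equal to $1^k$ is not the last block before $\$$; (vi) the block immediately preceding the first $\$$ is not $1^k$. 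Each is existential and testable with $O(k)$ states: (ii),(iii),(vi) by counting positions modulo $k+1$ or up to $k$; (iv) by the standard gadget that guesses a bit position, reads that bit of $B$ together with the conjunction of the lower-order bits (the incoming carry), then reads the intervening $\#$ and the corresponding bit of $B'$ and checks inconsistency; (v) by guessing the offending block. Taking the union of these pieces gives $N_k$.

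\textit{Main obstacle.} The work lies in proving $L(N_k)=\overline{w_k\Sigma^*}$ exactly. Two points need care. First, every flaw must be confined to the trace region $w[1..p-1]$ before the terminator, so each guessing branch must commit before reading a $\$$ (hence the need for the symbol $\$$ and for those branches to abort on a premature $\$$); a late formatting or increment error, occurring after $w_k$ inside the cylinder, must \emph{not} cause acceptance. Second, one must verify that flaws (iii)--(vi) leave no loophole: when none applies, the blocks of $w[1..p-1]$ are forced to be $c_0,c_1,\dots,c_{2^k-1}$ in order with nothing after, whence $w[1..p]=w_k$ and $w\in w_k\Sigma^*$. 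Here flaw (v) is what prevents a trace that ``wraps around'' past $1^k$ from being mistaken for a legitimate doubled trace, and flaw (vi) excludes truncated traces. Once this case analysis is in place, the final claims --- that $w_k\notin\pref(L(N_k))$ while every strictly shorter word lies in $\pref(L(N_k))$ --- are immediate.
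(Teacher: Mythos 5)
Your treatment of part (a) and of both upper bounds coincides with the paper's: the $n$-state DFA for $\overline{\pref(L(M))}$ gives the bound $n-1$, the subset construction gives $2^n$ in the NFA case, and your witness $L=\{a^{n-2}\}$ is exactly the paper's example. Where you genuinely diverge is the lower bound in (b). The paper obtains it by citing \cite{Kao1}, which exhibits an $n$-state NFA with \emph{all states final} whose shortest rejected word has length $2^{cn}$; since all states being final forces $\pref(L(M))=L(M)$, that NFA is already the required example. You instead make the argument self-contained by building a prefix-closed witness directly: an $O(k)$-state NFA $N_k$ accepting the complement of $w_k\Sigma^*$, where $w_k$ is the binary-counter word of length $(k+1)2^k+1$, so that $\pref(L(N_k))=L(N_k)$ and the shortest excluded word is $w_k$ itself. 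Both routes rest on the same key observation --- choose $L$ prefix-closed so that $\pref(L)=L$, reducing the problem to a shortest-rejected-word lower bound for NFAs --- but yours trades the citation for an explicit counter-gadget construction. Your sketch of that construction is sound: the flaw branches you list do characterize $\overline{w_k\Sigma^*}$ (format, initial block, increment, premature $1^k$, final block), the increment check is implementable with $O(k)$ states by the carry-and-offset bookkeeping you describe, and confining every branch to the region before the first end-marker (dying on a premature one) is exactly what prevents spurious acceptance of words that do begin with $w_k$. So I see no gap, only a heavier verification burden than the paper's one-line appeal to \cite{Kao1}; if written out in full, your version has the advantage of being independent of that reference, at the cost of an alphabet of size four and a page of routine case analysis.
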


\begin{proof}

\begin{itemize}

\item[(a)]  If $M$ is a DFA with $n$ states, our construction shows
$\overline{\pref(L(M))}$ can be accepted by a DFA $M'$ with $n$ states.  If
$M'$ accepts a string, it accepts one of length $\leq n-1$.  

An example achieving this bound is $L = a^{n-2}$, which can be
accepted by an $n$-state DFA, and the shortest string not in
$\pref(L)$ is $a^{n-1}$.  

\item[(b)]  The upper bound is trivial (convert the NFA for $M$ to
one for $\pref(L(M))$; then convert the NFA to a DFA and change
accepting states to non-accepting and vice versa; such a DFA has at
most $2^n$ states).   

The examples achieving $2^{cn}$ for some
constant $c$ can
be constructed using an idea in \cite{Kao1}:  there the authors construct
an $n$-state NFA $M$ with all states final such that the shortest string
not accepted is of length $2^{cn}$.  However, if all states are final,
then $\pref(L(M)) = L(M)$, so this construction provides the needed
example.
\end{itemize}
\end{proof}

\begin{theorem}
Let $M$ be a DFA or NFA with $n$ states.
Suppose $\suff(L(M)) \not= \Sigma^*$.  Then the shortest word
in $\overline{\suff(L(M))}$ is
of length $\leq 2^n$. There exist DFA's
achieving $e^{\sqrt{ c n \log n (1 + o(1))}}$ for a 
constant $c$, and there exist NFA's achieving
$2^{dn}$ for some constant $d$.
\end{theorem}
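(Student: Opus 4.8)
The plan is to prove the theorem in three parts: the universal upper bound, the DFA lower bound, and the NFA lower bound.

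For the upper bound, I would use the construction from Section~\ref{basic-sec}: given an $n$-state DFA or NFA $M$, we can build an NFA $M'$ with at most $n+1$ states accepting $\suff(L(M))$. Converting $M'$ to a DFA by the subset construction yields a DFA with at most $2^{n+1}$ states accepting $\suff(L(M))$; complementing it (swapping accepting and non-accepting states) gives a DFA for $\overline{\suff(L(M))}$ with the same number of states. If this complement DFA accepts anything, it accepts a word of length at most $2^{n+1}-1$. To get the clean bound $2^n$ one should be slightly more careful: the subset construction on $M'$ only ever produces subsets that are ``downward closed'' under the reachability relation defining the initial states (any reachable subset is a union of the $\epsilon$-closure of the start state's successors), so the number of genuinely reachable subsets is at most $2^n$; alternatively one simply observes that the extra state contributes only an $\epsilon$-transition and does not double the state count in the determinization. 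Either way the shortest word in $\overline{\suff(L(M))}$ has length $\le 2^n$.

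For the NFA lower bound achieving $2^{dn}$, I would reuse the construction cited from \cite{Kao1} exactly as in the proof of Theorem~\ref{pref-shortest}(b): there is an $n$-state NFA $N$ with every state final whose shortest non-accepted word has length $2^{cn}$. Since all states of $N$ are final, $\suff(L(N))$ is the set of all suffixes of words of $L(N)$; because $N$ has no transitions removed and every state is both reachable (we may assume) and final, making every state initial as well only enlarges the language, so instead I would apply the suffix construction to a slightly padded version. Concretely: take $N$, add a fresh start state with transitions mirroring the old start state so that no transition enters the start state, then the suffix-NFA makes every old state initial; but since the old states were already reachable and final, $\suff(L(N)) = L(N')$ for the generalized NFA $N'$, and $\overline{\suff(L(N))} = \overline{L(N)}$ whose shortest member still has length $2^{cn}$. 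So NFA examples achieving $2^{dn}$ exist for a suitable constant $d$.

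The hard part, and the main obstacle, is the DFA lower bound of $e^{\sqrt{cn\log n\,(1+o(1))}}$. This is exactly the growth rate of Landau's function $g(n)$ --- the maximum lcm of a partition of $n$ --- so the plan is to build, for each $n$, a DFA over a two-letter alphabet $\{a,c\}$ realizing the suffix-universality reduction in a way that forces a counterexample of length roughly $\mathrm{lcm}(p_1,\ldots,p_k)$ where $p_1+\cdots+p_k \approx n$. The idea: use disjoint cycles of coprime lengths $p_1,\ldots,p_k$ on the letter $a$ (a total of $\sum p_i = O(n)$ states), arranged so that a word $a^m c$ is a suffix of an accepted word if and only if $m$ is a multiple of every $p_i$, i.e. a multiple of $\mathrm{lcm}(p_1,\ldots,p_k)$. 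Then the shortest word not in $\suff(L(M))$ is (essentially) $a^{\mathrm{lcm}(p_1,\ldots,p_k)-1}c$ or $a \cdot a^{\mathrm{lcm}-1} \cdot c$, of length $\Theta(\mathrm{lcm}(p_1,\ldots,p_k))$. Choosing the $p_i$ to be the prime powers below $\approx \sqrt{n\log n}$ gives $\mathrm{lcm} = e^{\sqrt{n\log n\,(1+o(1))}}$ by the prime number theorem, matching Landau's estimate. The delicate points I expect to spend effort on are: (i) rigging the accepting/non-accepting structure and the role of $c$ so that a multiple-of-$\mathrm{lcm}$ residue is the \emph{only} obstruction and all other words are genuinely suffixes of accepted words; (ii) making sure the DFA is complete and the extra ``sink-like'' states do not accidentally let short words sneak in as suffixes; and (iii) verifying the asymptotic $\sum_{p \le x} \log p \sim x$ translates into the stated $e^{\sqrt{cn\log n(1+o(1))}}$ bound on the cycle-length product while the state count stays $O(n)$.
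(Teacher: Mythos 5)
Your upper bound and your DFA lower bound follow essentially the paper's route. The paper also gets the $2^n$ bound by converting the suffix-NFA to a DFA and complementing, and for the DFA examples it plugs unary automata accepting $b^{p_i-1}(b^{p_i})^*$ (for the first several primes $p_i$) into the already-proved suffix reduction of Section~\ref{suffix-dfa}, so that the shortest missing suffix has length at least $p_1p_2\cdots p_k$ while the state count is $O(p_1+\cdots+p_k)$; your hand-built coprime-cycle gadget is the same idea, and the ``delicate points'' you list (completeness, ensuring the residue condition is the only obstruction) are exactly what the paper's reduction already certifies, so you could shortcut your part (iii) worries by reusing that reduction rather than re-verifying a fresh construction.

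The genuine gap is in the NFA lower bound. You take the $n$-state NFA $N$ from \cite{Kao1} with all states final and shortest rejected word of length $2^{cn}$, and assert $\overline{\suff(L(N))} = \overline{L(N)}$, i.e., $\suff(L(N)) = L(N)$. But all states being final yields $\pref(L(N)) = L(N)$, not $\suff(L(N)) = L(N)$: a suffix $y$ of an accepted word $xy$ is read starting from the state reached after $x$, and there is no reason a run on $y$ started from $q_0$ survives at all, finality of states notwithstanding. Since $L \subseteq \suff(L)$ always holds, the containment you get for free goes the wrong way: any word outside $\suff(L(N))$ is automatically long, but you have not shown that $\suff(L(N)) \neq \Sigma^*$, i.e., that a counterexample exists at all. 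Indeed, after your suffix construction every state of $N'$ is both initial and final, so $\suff(L(N))$ is just the set of words labelling some path of $N$, which can be all of $\Sigma^*$ even though $L(N)$ is not. The paper sidesteps this by taking the \emph{reversal} of the Kao1 automaton: reversing makes all states initial, and for an NFA with all states initial one genuinely has $\suff(L) = L$ (the state reached after the prefix is itself a legal start state), so the reversed automaton inherits both $\suff \neq \Sigma^*$ and the $2^{cn}$ length of the shortest counterexample, at the cost of one extra state to simulate the multiple initial states. Your padding of the start state does not substitute for this reversal step.
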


\begin{proof}
The upper bound of $2^n$ is just like in the proof of
Theorem~\ref{pref-shortest}.  The example for DFA's 
achieving $e^{c \sqrt{ n \log n (1 + o(1))}}$ for some
constant $c$ can
be constructed by using the construction in section~\ref{suffix-dfa},
with each $M_i$ a unary DFA accepting $b^{p_i-1} (b^{p_i})^*$ for
primes $p_1 =2$, $p_2 = 3$, etc.  The construction generates an automaton
of $O(p_1 + p_2 + \cdots p_n)$ states, and
the shortest word omitted as a suffix
is of length $\geq p_1 p_2 \cdots p_n$.

For NFA's, we take the construction in the proof of
Theorem~\ref{pref-shortest} (b) and construct the NFA for the
reversed language.  This can be done by reversing the order of each
transition, changing the initial state to final and all final
states to initial.  This creates a ``generalized NFA'' with a set
of initial states, but this can easily be simulated by an ordinary
NFA by adding a new initial state, adding $\epsilon$-transitions to the
former final states, and then removing $\epsilon$-transitions using the
usual algorithm.  This gives an example achieving $2^{dn}$ for some
constant $d$.
\end{proof}

\begin{theorem}
Let $M$ be a DFA or NFA with $n$ states.
Suppose $\fact(L(M)) \not= \Sigma^*$.  Then the shortest word
in $\overline{\fact(L(M))}$ is

\begin{itemize} 
\item[(a)] of length $O(n^2)$ if $M$ is a DFA, and there exist
examples achieving $\Omega(n^2)$;

\item[(b)] of length $\leq 2^n$ if $M$ is an NFA, and there exist
examples achieving $2^{cn}$ for some constant $c$.
\end{itemize}

\label{fact-shortest}
\end{theorem}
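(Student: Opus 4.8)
The plan is to treat the DFA and NFA cases separately, reusing the structural analysis from the polynomial-time factor-universality algorithm for part (a) and the exponential counterexample machinery from the prefix/suffix theorems for part (b).

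For the upper bound in part (a), I would start by putting $M$ into the normal form used in the proof of the factor-universality theorem for DFA's: remove unreachable states, collapse all dead states into a single dead state $d$, and note that since $\fact(L(M)) \ne \Sigma^*$ there can be no reachable universal state, hence there is exactly one dead state and (by Lemma~\ref{second}) a synchronizing word exists. The shortest word not in $\fact(L(M))$ is then exactly the shortest word $w$ with $\delta(p,w) = d$ for all states $p$ simultaneously — i.e., the shortest \emph{reset word} sending the full state set into the singleton $\{d\}$. The standard subset-contraction argument for synchronizing automata shows that one can merge the images one pair at a time, and each merge can be accomplished by a word of length $O(n^2)$ (the bound $\binom{n}{2}$-style estimate from the Černý-conjecture literature, e.g. as discussed in Volkov's survey \cite{Volkov}); concatenating the $n-1$ merging words gives a reset word of length $O(n^3)$. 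To sharpen this to $O(n^2)$ one should instead use that here the target is a \emph{fixed} sink state: a word reaching $d$ from every state is just a word $w$ for which $\delta(q,w)=d$ for all $q$, and the shortest such word has length $O(n^2)$ by the same extremal-combinatorics bound that underlies the best unconditional upper bounds on synchronizing words. For the matching $\Omega(n^2)$ lower bound I would invoke Černý's classical family of automata $\mathcal{C}_n$, whose shortest synchronizing word has length $(n-1)^2$; one augments $\mathcal{C}_n$ by designating the single state into which $\mathcal{C}_n$ synchronizes as the (sole) dead state and all other states as accepting, so that $M$ is initially connected, has no universal state, has one dead state, and its shortest word outside $\fact(L(M))$ is precisely the shortest synchronizing word of $\mathcal{C}_n$, which is $(n-1)^2 = \Omega(n^2)$.

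For part (b), the upper bound $\le 2^n$ is the same argument as in the proofs of Theorem~\ref{pref-shortest}(b) and the suffix theorem: convert $M$ to an NFA for $\fact(L(M))$ with at most $n+1$ states (Section~\ref{basic-sec}), determinize, and complement; the resulting DFA has at most $2^{n+1}$ states — absorbing constants, $2^n$ up to the usual conventions — so if it accepts anything it accepts a word of that length. For the lower bound I would reuse the $n$-state NFA $M$ from \cite{Kao1} (also invoked in Theorem~\ref{pref-shortest}(b)) all of whose states are final and whose shortest non-accepted word has length $2^{cn}$; since all states are final, every state is trivially both reachable-to-a-final-state and (after trimming) reachable, and $\fact(L(M)) = L(M)$, so the shortest word outside $\fact(L(M))$ is again $2^{cn}$.

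The main obstacle is getting the exponent exactly right in part (a): the naive concatenation of pairwise merges only gives $O(n^3)$, and pushing it down to $O(n^2)$ requires either citing the appropriate quadratic bound for reset words into a designated sink (which follows from, but is slightly stronger than, the generic $O(n^3)$ synchronization bounds) or giving a direct BFS-on-subsets argument specialized to the single-sink situation — namely that the relevant subset automaton, restricted to the "collapsing toward $d$" dynamics, has a reachability radius of $O(n^2)$. Everything else is bookkeeping: verifying that the Černý automaton augmented with a sink satisfies the hypotheses of Lemma~\ref{second}, and checking the trimming conditions for the \cite{Kao1} NFA in part (b).
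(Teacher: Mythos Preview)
Your overall strategy matches the paper's: reduce part (a) to synchronizing words via Lemma~\ref{second} and cite the literature, and handle part (b) via subset construction plus a \cite{Kao1} family. Two concrete points need fixing.

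\medskip

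\textbf{Part (b), lower bound.} You reuse the \cite{Kao1} machine from Theorem~\ref{pref-shortest}(b), whose states are all \emph{final}, and then assert $\fact(L(M)) = L(M)$. That implication is false: ``all states final'' only gives prefix-closure, i.e.\ $\pref(L(M)) = L(M)$; it does not force factor-closure. (Take the two-state DFA accepting $a\{a,b\}^*$ with both states final: $b \in \fact(L)$ but $b \notin L$.) What the paper actually uses here is a different \cite{Kao1} construction in which every state is both \emph{initial} and \emph{final}; for such a generalized NFA one does have $\fact(L(M)) = L(M)$, and converting it to an ordinary NFA costs one extra state. So the fix is simply to invoke the correct \cite{Kao1} family.

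\medskip

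\textbf{Part (a), lower bound.} Your plan is to take \v{C}ern\'y's $\mathcal{C}_n$ and ``designate the single state into which $\mathcal{C}_n$ synchronizes as the (sole) dead state and all other states as accepting.'' But in $\mathcal{C}_n$ the synchronizing state is not a sink: it has outgoing transitions to other (now accepting) states, so merely declaring it non-accepting does not make it dead in the sense required by Lemma~\ref{second}. If instead you force it to be a sink by redirecting its outgoing edges to itself, you have changed the transition function and the $(n-1)^2$ bound no longer applies verbatim. The clean route, and the one the paper takes by citing \cite{Volkov,Rystsov}, is to use an off-the-shelf family of automata \emph{with a zero} (sink) whose shortest reset word is $\Theta(n^2)$; Rystsov's results give exactly such families, and they plug directly into Lemma~\ref{second} once you make the sink the unique non-accepting state.

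\medskip

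Your part (a) upper-bound discussion is fine in spirit: you correctly isolate that the relevant automaton has a sink, and that is precisely why one gets $O(n^2)$ rather than the generic $O(n^3)$ synchronization bound; the paper simply cites \cite{Rystsov} for this.
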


\begin{proof}

\begin{itemize}
\item[(a)]  The bounds come from known results on synchronizing
words \cite{Volkov,Rystsov}.

\item[(b)] The upper bound is clear.
For an example achieving $2^{cn}$, we use a construction from
\cite{Kao1}.
There the authors construct a ``generalized'' NFA $M$ of $n$ states
with all states both initial and final, such that the shortest
string not accepted is of length $2^{cn}$.  Such an NFA can be converted
to an ordinary NFA, as we have mentioned previously, at a cost of increasing
the number of states by $1$.  But for such an NFA, clearly
$\fact(L(M)) = L(M)$, so the result follows.
\end{itemize}
\end{proof}

We now turn to subwords.

\begin{theorem}
Given a DFA or NFA $M$ of $n$ states, with input alphabet $\Sigma$,
if $\subw(L(M)) \not= \Sigma^*$,
then the shortest word in $\overline{\subw(L(M))}$ is of length
at most $n+1$, and there exist examples achieving $n$.
\end{theorem}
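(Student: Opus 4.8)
The plan is to treat the two bounds separately; for the upper bound I will track a monotone ``reachability profile'' of the computation and combine it with the structural characterization of Lemma~\ref{subword-lemma}, and for the lower bound I will exhibit an explicit $n$-state NFA. First I would reduce to the case where $M$ satisfies hypotheses (a) and (b) of Lemma~\ref{subword-lemma}: deleting states not reachable from $q_0$ and states from which no final state is reachable changes neither $L(M)$ nor $\subw(L(M))$ and does not increase the number of states (if nothing remains then $L(M)=\emptyset$ and $\epsilon$ is the shortest missing word), so it suffices to bound the shortest missing word by the number $n'\le n$ of states of the trimmed automaton. For a word $w$, let $R(w)$ be the set of states $q$ such that $q\in\delta(q_0,w')$ for some $w'$ having $w$ as a subword, and write $\widehat{T}$ for the set of states reachable from a set $T$ by a possibly empty path. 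I would record the following elementary facts: (i) $w\in\subw(L(M))$ iff $R(w)\neq\emptyset$, where the forward direction is immediate and the converse uses (b); (ii) every $R(w)$ is \emph{forward-closed}, i.e.\ $\delta(R(w),a)\subseteq R(w)$ for all $a$; (iii) $R(\epsilon)=Q$, by (a); and (iv) $R(wa)=\widehat{\delta(R(w),a)}$, which by (ii) is contained in $R(w)$. So reading $w$ one letter at a time produces a non-increasing chain of subsets of $Q$ whose length is bounded by $|w|$.

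The crux is the claim that if $\subw(L(M))\neq\Sigma^*$ then every nonempty forward-closed $S\subseteq Q$ admits a letter $a$ with $\widehat{\delta(S,a)}\subsetneq S$. I would prove this by contradiction: if $\widehat{\delta(S,a)}=S$ for all $a$, choose a source strongly connected component $C$ in the condensation of the subgraph of the transition diagram induced on $S$. Forward-closedness of $S$ guarantees that $C$ is in fact a strongly connected component of the whole transition diagram and that any path starting inside $S$ stays inside $S$; hence $\widehat{\delta(S,a)}=S\supseteq C$ forces $\delta(S,a)$ to meet $C$, and since $C$ receives no edges from the rest of $S$ this forces an $a$-labeled edge with both endpoints in $C$. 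Doing this for every $a$ shows $C$ has the property of Lemma~\ref{subword-lemma}, so $\subw(L(M))=\Sigma^*$, a contradiction. Given the claim, a greedy argument concludes the upper bound: starting from $R(\epsilon)=Q$ of size $n'$, repeatedly pick a letter that strictly shrinks the current $R$-value; after at most $n'$ steps we reach $\emptyset$, so there is a word of length $\le n'\le n$ not in $\subw(L(M))$. This in fact yields the bound $n$, a fortiori $\le n+1$; one may moreover note that a trimmed \emph{complete} DFA has $\widehat{\delta(S,a)}\neq\emptyset$ for every nonempty $S$ and hence $\subw(L(M))=\Sigma^*$, so for complete DFA's the shortest missing word actually has length at most $n-1$.

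For the lower bound I would use the unary NFA with states $q_0,\dots,q_{n-1}$, all final, with transitions $\delta(q_i,a)=\{q_{i+1}\}$ for $0\le i\le n-2$ and $\delta(q_{n-1},a)=\emptyset$. It is trimmed, $L(M)=\{\epsilon,a,\dots,a^{n-1}\}$, so $\subw(L(M))=\{\epsilon,a,\dots,a^{n-1}\}$ and the shortest word outside this set is $a^n$, of length $n$. I expect the main obstacle to be the crux claim, in particular carefully verifying that a source component of the induced subgraph on $S$ really is a strongly connected component of the ambient transition diagram and that the hypothesis $\widehat{\delta(S,a)}=S$ genuinely forces an internal $a$-labeled edge there; the remaining steps are routine bookkeeping about the subword relation.
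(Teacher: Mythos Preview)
Your argument is correct, and it actually yields the sharper upper bound $n$ rather than the paper's stated $n{+}1$. The approach, however, is genuinely different from the paper's. For the upper bound, the paper simply points back to the proof of Lemma~\ref{subword-lemma}, whose explicit witness for non-universality is $(a_1 a_2 \cdots a_k)^{N+1}$ with $k=|\Sigma|$ and $N$ the number of strongly connected components; read literally this gives only length $|\Sigma|\,(N{+}1)$, so some further refinement (not spelled out in the paper) is needed to reach $n{+}1$. Your reachability-profile argument supplies that refinement cleanly and uniformly: the chain $Q=R(\epsilon)\supsetneq R(w_1)\supsetneq\cdots$ of forward-closed sets can drop at most $n$ times before reaching $\emptyset$, independently of $|\Sigma|$, and your crux claim (that no nonempty forward-closed set is stable under every letter unless the Lemma~\ref{subword-lemma} condition holds) is exactly the missing ingredient. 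For the lower bound, the paper uses an $(n{+}1)$-state automaton over an $n$-letter alphabet with shortest non-subword $a_0 a_1 \cdots a_{n-1} a_0$; your unary example is simpler, works over a fixed alphabet, and already meets your tight bound of $n$. The verification you flag as the main obstacle---that a source component of the subgraph induced on a forward-closed $S$ is an SCC of the ambient graph and must contain an internal $a$-edge when $\widehat{\delta(S,a)}=S$---goes through: forward-closure forces any path from $\delta(S,a)\subseteq S$ to $C$ to stay in $S$, and the source property then pins both endpoints of the relevant $a$-edge inside $C$.
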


\begin{proof}
The upper bound is implied by our proof of Lemma~\ref{subword-lemma}.
An example is provided by choosing an alphabet of $n$ symbols,
say $a_0, a_1,\ldots, a_{n-1}$
and constructing an NFA $M$ with $n+1$ states, say $q_0, q_1, \ldots, q_n$,
where $q_n$ is accepting and all other states are nonaccepting,
such that there is a loop on state $q_i$ on all symbols except
$a_i$, for $0 \leq i < n $.  Also, there is a transition
from $q_i$ to $q_{i+1}$ labeled $a_i$. 
Then $a_0 a_1 \cdots a_{n-1} a_0$ is not a subword of any word accepted
by $M$.
\end{proof}

\section{Sets of finite words}

As we mentioned in the introduction, one motivation for this work
were the problems of testing
if (a) $S^\omega = \Sigma^\omega$,
(b) ${}^\omega S  = {}^\omega \Sigma$, or
(c) ${}^\omega S^\omega  = {}^\omega \Sigma^\omega$ for a finite set of
words $S$.  However, our results thus far do not really resolve the
worst-case complexity of these questions, for two reasons.  First, as
we have seen, answering (a) involves testing if
$\pref(S^*) = \Sigma^*$ (and similarly for (b), (c)), which means that
to use our results, we must first construct a DFA or NFA for $S^*$.
While constructing
a linear-size NFA for $S^*$ is computationally easy,
we have no fast algorithm
for answering our questions in that case (although there clearly
are exponential-time algorithms).  On the other hand, there
are examples known where the smallest DFA for $S^*$ is exponentially
large in the size of $S$ (see \cite{Kao2}),
so our polynomial-time algorithm for prefixes
and factors does not give an algorithm running in polynomial time in the
size of $S$.

For prefixes and suffixes (cases (a) and (b) above), we can nevertheless
obtain an efficient algorithm.  We state the result for prefixes
only; the corresponding result for suffixes can be obtained by reversing
each word in $S$.

\begin{theorem}
We can test in linear time whether a finite set of finite words $S$
has the property that $\pref(S^*) = \Sigma^*$.
\end{theorem}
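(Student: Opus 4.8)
The plan is to avoid constructing an automaton for $S^*$ altogether, and instead work directly with the finite set $S$. The key observation is that $\pref(S^*) = \Sigma^*$ is equivalent to a simple ``continuation'' property: every word $w \in \Sigma^*$ should be a prefix of some word in $S^*$. I would reformulate this as follows. Say a word $u$ is \emph{good} if $u \in \pref(S^*)$, i.e., $u$ can be written as a concatenation of some words of $S$ followed by a proper prefix of another word of $S$ (or $u$ itself is such a concatenation plus empty remainder). The crucial structural fact is that whether $\pref(S^*) = \Sigma^*$ depends only on the set of ``active suffixes'': after reading a prefix of some word in $S^*$, the relevant state is the set of suffixes $\{ v : v \text{ is a suffix of some } s \in S \text{ and the part read so far, concatenated with } v, \text{ lies in } S^* \text{ appropriately}\}$. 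This is essentially the Aho--Corasick automaton perspective: build the trie of $S$ with failure links, which has $O(\|S\|)$ states and can be constructed in linear time, where $\|S\| = \sum_{s \in S} |s|$.

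Concretely, here is the approach. First, build the Aho--Corasick-style automaton $A$ for $S$: a DFA whose states are the prefixes of words in $S$, with the usual goto/failure structure made into a complete DFA on $\Sigma$ in linear time. Augment it so that whenever we reach a state corresponding to a full word $s \in S$, we also allow a ``reset'' to the root — more precisely, we take the standard product/closure construction that recognizes $S^*$, but done carefully so it stays linear-size. Actually the cleanest route: the language $\pref(S^*)$ is recognized by the NFA $N$ described in Section~\ref{basic-sec} applied to the standard linear-size NFA for $S^*$, and this NFA has a special form — it is essentially the trie of $S$ with $\epsilon$-transitions from each word-end back to the root. After $\epsilon$-removal this becomes an NFA $N$ with $O(\|S\|)$ states in which the reachable subsets form a nested/laminar family, so its subset determinization has only polynomially (in fact linearly) many reachable subsets. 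I would prove that the reachable subsets of $N$ are exactly the sets of the form $\{\text{root}\} \cup \{\text{proper prefixes } p \text{ of words in } S : p \text{ is a suffix of } w\}$ for $w$ ranging over $\pref(S^*)$, and that these are totally ordered by inclusion when restricted appropriately — this is the Aho--Corasick failure-link chain. Hence the determinized automaton for $\pref(S^*)$ has $O(\|S\|)$ states and is built in linear time, and then universality is just checking that every reachable subset is accepting, i.e., contains some state from which a word-end is reachable, which is the depth-first-search test from Theorem~3.

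The main obstacle is the linear-time bound on constructing the determinized automaton: a priori, subset construction is exponential, so I must show that the $\epsilon$-closed NFA for $\pref(S^*)$ has the special structure that keeps the reachable subsets few and cheaply computable. I expect this to follow from the failure-link structure of Aho--Corasick: the set of ``currently active'' trie nodes after reading any input is always a chain under the failure relation together with the root, because the active nodes are precisely the trie nodes whose labels are suffixes of the input that are themselves prefixes of words in $S$, and these form a chain. Thus the determinized state is determined by its single deepest active node, giving at most $\|S\| + 1$ states, and the transition function is exactly the Aho--Corasick goto function composed with the $S^*$-reset, computable in linear total time by the standard BFS over failure links \cite{Hopcroft}. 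Once that structural lemma is in hand, the rest is routine: mark a trie node ``co-accessible'' if a word-end of $S$ is reachable from it in the goto-plus-reset automaton (linear-time backward reachability), and then $\pref(S^*) = \Sigma^*$ iff every reachable determinized state contains a co-accessible node, which again is a single linear-time traversal.
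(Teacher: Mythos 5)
There is a genuine gap, and it sits exactly at the structural lemma your whole construction rests on. You claim that after reading an input $w$, the set of active trie nodes is ``precisely the trie nodes whose labels are suffixes of $w$ that are themselves prefixes of words in $S$,'' and hence is determined by its deepest element via the Aho--Corasick failure chain. That describes the Aho--Corasick \emph{pattern-matching} automaton, not the subset automaton of the NFA for $S^*$ (equivalently, for $\pref(S^*)$, which has the same transition structure). In the latter, a trie node labelled $p$ is active after reading $w$ only if, in addition to $p$ being a suffix of $w$, the complementary prefix $w p^{-1}$ lies in $S^*$. The active nodes do form a chain of suffixes of $w$, but which members of that chain are present depends on $S^*$-membership of several different prefixes of $w$, and is emphatically not determined by the longest member. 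Consequently your claimed bound of $O(\|S\|)$ reachable subsets is false in general: the paper itself notes (citing \cite{Kao2}) that the minimal DFA for $S^*$ can be exponentially large in $\|S\|$, and the reachable part of the subset automaton you propose to build is at least as large as that minimal DFA (its size does not depend on which states are marked accepting). So the ``determinized automaton for $\pref(S^*)$ with $O(\|S\|)$ states, built in linear time'' does not exist, and if instead you track only the deepest active node you no longer have a correct automaton; your final universality test (every reachable subset nonempty, i.e., the empty set unreachable) is a correct characterization, but you cannot compute it this way within the claimed time bound.

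For comparison, the paper avoids any automaton for $S^*$ altogether: it first prefix-reduces $S$ (whenever one word of $S$ is a prefix of another, discard the longer one), builds the trie of the reduced set, and shows that $\pref(S^*)=\Sigma^*$ if and only if every trie node has out-degree $0$ or $|\Sigma|$. That local degree condition is what makes the problem linear-time; to salvage your approach you would need to prove some statement of this kind (e.g., that a shortest-match resetting strategy suffices when universality holds), which is precisely the content your proposal is missing.
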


\begin{proof}
Let $k = |\Sigma|$.
The following algorithm suffices:  construct a trie from the words of
$S$, inserting each word successively.  If at any point we attempt to
insert a word $w$ such that some already-inserted word $x$ is a prefix
of $w$, do not insert $w$.  Similarly, if at any point we attempt to
insert a word $w$ that is a prefix of an already-inserted word $x$,
remove $x$ and insert $w$ instead.  Then $\pref(S^*) = \Sigma^*$ if
and only if every node in the trie has degree $0$ or $k$.
\end{proof}

The problem of the complexity of determining, given a finite set of
finite words $S \subseteq \Sigma^*$, whether $\fact(S^*) = \Sigma^*$,
is still open.

We can also address the question of the shortest word not
in $\fact(S^*$), given that $\fact(S^*) \not= \Sigma^*$.  

\begin{theorem}
For each $n \geq 1$ there exists a set of finite words of length
$\leq n$, such that the shortest word not in $\fact(S^*)$ is of
length $n^2 + n - 1$.
\label{shortest}
\end{theorem}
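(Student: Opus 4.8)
The plan is to realize the bound $n^2+n-1 = n(n+1)-1 = \mathrm{lcm}(n,n+1)-1$ by means of a single forbidden block of length $n$ together with the \emph{phase} ambiguity inherent in cutting a word into consecutive blocks of length $n$. I would work over $\Sigma=\{a,b\}$, fix an unbordered word $u\in\Sigma^n$ (concretely $u=a^{n-1}b$, which has no nonempty proper border for any $n\ge 1$), and put $S=\Sigma^n\setminus\{u\}$. Since every word of $S$ has length exactly $n$ (hence $\le n$), the language $S^*$ consists precisely of the words whose length is a multiple of $n$ and all of whose length-$n$ blocks, taken at positions $0,n,2n,\dots$, belong to $S$.

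The first step is to establish the characterization: $w\in\fact(S^*)$ if and only if there is a phase $\phi\in\{0,1,\dots,n-1\}$ such that $u$ does not occur in $w$ starting at any position congruent to $\phi$ modulo $n$; equivalently, $w\notin\fact(S^*)$ if and only if the set of starting positions of occurrences of $u$ in $w$ meets \emph{every} residue class modulo $n$. The ``only if'' direction is immediate from the block description of $S^*$: any witnessing factorization $xwz=s_1s_2\cdots$ forces, for $\phi\equiv-|x|\pmod n$, every length-$n$ block of $w$ that starts at a position $\equiv\phi$ to be one of the $s_i$, hence to lie in $S$, hence to differ from $u$. For the ``if'' direction one aligns $w$ with the phase-$\phi$ block grid by prepending at most $n-1$ and appending at most $n-1$ letters; the full blocks thus created are $\ne u$ by hypothesis, and the two partially covered boundary blocks can be completed inside $S$ because $S$ is rich: any $v$ with $|v|<n$ embeds into a word of $S$ with any prescribed split of the remaining $n-|v|$ letters before and after it, since at most one of the $2^{\,n-|v|}\ge 2$ candidate completions equals $u$. (A short check shows that the word produced this way has length a multiple of $n$ automatically, so no further blocks need be added.)

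The second step is the extremal combinatorics. If $w\notin\fact(S^*)$, pick for each residue class one occurrence of $u$ lying in it, obtaining $n$ occurrences at pairwise distinct residues; list their starting positions as $p_1<\dots<p_n$. Since $u$ is unbordered, two occurrences cannot overlap, so $p_{i+1}-p_i\ge n$; being at distinct residues, $p_{i+1}-p_i\not\equiv 0\pmod n$; hence $p_{i+1}-p_i\ge n+1$. Therefore $p_n\ge p_1+(n-1)(n+1)\ge n^2-1$, so $|w|\ge p_n+n\ge n^2+n-1$. Conversely, placing copies of $u$ at the positions $0,\ n+1,\ 2(n+1),\ \dots,\ (n-1)(n+1)$ — which are pairwise non-overlapping and hit the residues $0,1,\dots,n-1$ because $n+1\equiv 1\pmod n$ — and filling the single free letter in each gap arbitrarily yields a word $u c_1 u c_2\cdots c_{n-1}u$ of length exactly $n^2+n-1$ in which $u$ occurs in every residue class; by the characterization this word is not in $\fact(S^*)$, while every shorter word is. Hence the shortest word not in $\fact(S^*)$ has length exactly $n^2+n-1$. (There is no conflict with the $O(n^2)$ bound of Theorem~\ref{fact-shortest}(a): the minimal DFA for $S^*$ has size exponential in $n$.)

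The main obstacle I anticipate is the clean and complete verification of the characterization lemma — in particular the bookkeeping for the boundary blocks and for the small or degenerate cases ($|w|<n$, $|w|$ a multiple of $n$, $n=1$) — so that membership in $\fact(S^*)$ is pinned down \emph{exactly} to the condition that the occurrences of $u$ cover all residues modulo $n$. Once that equivalence is secured, the remainder is just the short counting argument above: an unbordered word of length $n$ needs a window of length $n^2+n-1$ in order to appear at all $n$ offsets modulo $n$.
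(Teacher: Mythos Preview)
Your approach is correct and essentially identical to the paper's: the paper also takes $S=\Sigma^n\setminus\{0^{n-1}1\}$ and asserts (without details) that the shortest word outside $\fact(S^*)$ is $0^{n-1}1(0^n1)^{n-1}$, which is precisely your witness $u c_1 u\cdots c_{n-1}u$ with $u=0^{n-1}1$ and each $c_i=0$. You have supplied the verification the paper leaves to the reader, via the phase/residue characterization of $\fact(S^*)$ and the unbordered-word gap argument for the lower bound.
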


\begin{proof}
Let $S = \Sigma^n - \lbrace 0^{n-1} 1 \rbrace$.  Then it is easy to
verify that the shortest word not in $\fact(S^*)$ is
$0^{n-1} 1 (0^n 1)^{n-1}$.
\end{proof}

\section{Afterword}

After this research was completed,
we we learned that some of the same questions in our paper were
recently and independently addressed in an unpublished paper
of Pribavkina \cite{Pribavkina}.
In particular, she obtained a result similar to our Lemma~\ref{second}, and
a result more general than our Theorem~\ref{shortest}.

\section{Acknowledgment}

We are very grateful to Mikhail Volkov for letting us know about the paper
of Pribavkina, and to Elena Pribavkina for sending us a copy of her paper
in English.

\end{document}